\documentclass[journal,11pt,onecolumn,draftclsnofoot]{IEEEtran}
\usepackage{cite}
\usepackage{graphicx,color,epsfig,rotating}
\usepackage{amsfonts,amsmath,amssymb,bbm}
\usepackage{setspace}
\usepackage{algorithm}
\usepackage[algo2e]{algorithm2e} 
\usepackage{subcaption}
\usepackage{mdwtab}
\usepackage{placeins}
\usepackage{psfrag, graphicx}
\usepackage[latin1]{inputenc}
\usepackage{multirow}
\usepackage{stfloats}
\usepackage{tabularx} 
\usepackage{booktabs} 
\usepackage{url}
\usepackage{bm}
\usepackage{geometry}
\usepackage{pstricks}
 \allowdisplaybreaks

\long\def\comment#1{}

\newtheorem{theorem}{Theorem}
\newtheorem{lemma}{Lemma}
\newtheorem{remark}{Remark}
\makeatletter
\newcommand{\subalign}[1]{%
  \vcenter{%
    \Let@ \restore@math@cr \default@tag
    \baselineskip\fontdimen10 \scriptfont\tw@
    \advance\baselineskip\fontdimen12 \scriptfont\tw@
    \lineskip\thr@@\fontdimen8 \scriptfont\thr@@
    \lineskiplimit\lineskip
    \ialign{\hfil$\m@th\scriptstyle##$&$\m@th\scriptstyle{}##$\crcr
      #1\crcr
    }%
  }
}
\newcommand{\thickhline}{%
    \noalign {\ifnum 0=`}\fi \hrule height 1pt
    \futurelet \reserved@a \@xhline
}
\newcolumntype{"}{@{\hskip\tabcolsep\vrule width 1pt\hskip\tabcolsep}}

\makeatother

\newfont{\bbb}{msbm10 scaled 700}

\newfont{\bb}{msbm10 scaled 1100}

\newcommand{\ratreg}{rate region\xspace}

\newcommand{\assod}{associated\xspace}

\newcommand{\survg}{surviving\xspace}
\newcommand{\Survg}{Surviving\xspace}

\newcommand{\Rmk}{Remark\xspace}

\newcommand{\rd}{round\xspace}
\newcommand{\rds}{rounds\xspace}

\newcommand{\msgs}{messages\xspace}

\newcommand{\ie}{i.e.\xspace}

\newcommand{\secty}{security\xspace}
\newcommand{\Secty}{Security\xspace}

\newcommand{\af}{as follows\xspace}

\newcommand{\itic}{information-theoretic\xspace}

\newcommand{\ntwk}{network\xspace}

\newcommand{\comm}{communication\xspace}
\newcommand{\comms}{communications\xspace}
\newcommand{\Comm}{Communication\xspace}

\newcommand{\achvb}{achievable\xspace}

\newcommand{\Rc}{{\cal R}}

\newcommand{\eqdef}{\stackrel{\Delta}{=}}

\newcommand{\be}{\begin{equation}}
\newcommand{\ee}{\end{equation}}
\newcommand{\bea}{\begin{eqnarray}}
\newcommand{\eea}{\end{eqnarray}}

\let\mrm\mathrm
\let\tbf\textbf
\let\tit\textit

\begin{document}
\title{On the Fundamental Limits of Hierarchical Secure Aggregation with Dropout and Collusion Resilience}

\author{
Zhou Li,~\IEEEmembership{Member,~IEEE},
Yizhou Zhao,~\IEEEmembership{Member,~IEEE},
Xiang~Zhang,~\IEEEmembership{Member,~IEEE},
and Giuseppe Caire,~\IEEEmembership{Fellow,~IEEE}

\thanks{Z. Li is with the Guangxi Key Laboratory of Multimedia Communications and Network Technology, 
Guangxi University, Nanning 530004, China (e-mail: lizhou@gxu.edu.cn).}

\thanks{Y. Zhao is with the College of Electronic and Information Engineering, Southwest University, Chongqing, China (e-mail: onezhou@swu.edu.cn).}

\thanks{X. Zhang and G. Caire are with the Department of Electrical Engineering and Computer Science, Technical University of Berlin, 10623 Berlin, Germany (e-mail: \{xiang.zhang, caire\}@tu-berlin.de).
}

\thanks{\tit{Corresponding author:} Xiang Zhang.}
}

\maketitle
\IEEEpeerreviewmaketitle

\begin{abstract}
We study the fundamental communication limits of information-theoretic secure aggregation in a hierarchical network consisting of a server, multiple relays, and multiple users per relay. Communication proceeds over two rounds and two hops, and the system is subject to arbitrary user and relay dropouts. Up to $T$ users may collude with either the server or any single relay. The server aims to recover the sum of the inputs of all users that survive the first round, while learning no additional information beyond the aggregate sum and the inputs of the colluding users. Each relay, however, must learn nothing about the users' inputs except for the information revealed by the inputs of the colluding users under the same collusion model.

We introduce a four-dimensional rate tuple that captures the communication cost across rounds and hops. Under a delayed message availability model, we establish necessary and sufficient conditions for feasibility and fully characterize the optimal first-round communication rates. For the second round, we characterize the optimal user-to-relay rate and derive lower and upper bounds on the relay-to-server rate. While these bounds do not coincide in general, they are tight in certain  regimes of interest.
Our results reveal a sharp threshold phenomenon: secure aggregation is feasible if and only if the total number of surviving users across surviving relays exceeds the collusion threshold. Achievability is established via a vector linear coding scheme with carefully structured correlated randomness exhibiting MDS-like properties, ensuring correctness and information-theoretic security under all possible dropout patterns. Entropic converse bounds are also derived.
\end{abstract}

\begin{IEEEkeywords} 
Secure aggregation, hierarchical networks, collusion, dropout, security
\end{IEEEkeywords}


\allowdisplaybreaks
\section{Introduction}

The rapid growth of distributed data sources has enabled large-scale collaborative learning and computation paradigms, such as federated learning~\cite{konecny2016federated,pmlr-v54-mcmahan17a,KairouzFL,9084352,yang2018applied}. In such systems, a central server aggregates information from a large number of users without directly accessing their individual data. A fundamental problem arising in this context is \emph{secure aggregation}~\cite{bonawitz2016practical,bonawitz2017practical}. The information-theoretic formulation of this problem was first studied by Zhao and Sun~\cite{zhao2023secure}, where the server computes the sum of the users' inputs while learning no additional information about the individual contributions.

In many practical deployments, communication between users and the server is not direct. Instead, hierarchical architectures are commonly adopted, where intermediate nodes such as relays or edge servers assist in collecting and forwarding information~\cite{11316456,zhang_mshsa_online2025,li2025collusionresilienthierarchicalsecureaggregation}. In such settings, secure aggregation must be performed over a hierarchical network, giving rise to the problem of \emph{hierarchical secure aggregation} (HSA)\cite{zhang2024optimal, zhang2025fundamental,11154963,lu2024capacity,li2025collusionresilienthierarchicalsecureaggregation,weng2026resilient}. This user-relay-server structure naturally arises in large-scale wireless and edge-assisted systems, where multi-hop communication is common. The introduction of a relay layer fundamentally changes the information flow: messages are processed and forwarded across multiple hops, imposing additional structural constraints on encoding and decoding operations.

Secure aggregation protocols must also address two additional practical challenges. First, robustness to user \emph{dropouts}~\cite{9834981,so2022lightsecagg,jahani2022swiftagg,zhang2025secure,liu2022efficient}: due to unreliable links, device failures, or limited energy, some users may fail to transmit their messages, and their identities are unknown a priori. Second, security under \emph{collusion}~\cite{li2023weakly,li2025weakly,Zhang_Li_Wan_DSA,Li_Zhang_GroupwiseDSA,Li_Zhang_WeaklyDSA}: the server or a subset of users may collude to infer private information. In hierarchical systems, these challenges are compounded across multiple hops and rounds, creating potential additional avenues for information leakage. A secure aggregation protocol must thus guarantee correct input sum recovery and \itic security
under arbitrary dropout and collusion patterns.

Motivated by  the  above  considerations, we study a minimal yet representative \tit{two-round} \comm model for hierarchical secure aggregation with user and relay dropouts. 
The system consists of $U$ relays, each associated with $V$ users, resulting in a total of $UV$ users. Each User $(u,v)$ holds an input $W_{u,v}$ and an independent offline-generated secret key $Z_{u,v}$. Communication occurs through the relay layer over two rounds.
In the first round, each User $(u,v)$ sends a message to  the associated Relay $u$. Let $\mathcal{V}^{(1)}_u \subseteq [V]$ denote the surviving users under Relay $u$, and $\mathcal{U}^{(1)} \subseteq [U]$ the surviving relays. The server observes the surviving user set $\bigcup_{u \in \mathcal{U}^{(1)}} \mathcal{V}^{(1)}_u$. 
In the second round,
the  \survg users from the first \rd 
may also drop out. Let $\mathcal{V}^{(2)}_u \subseteq \mathcal{V}^{(1)}_u$ and  $\mathcal{U}^{(2)} \subseteq \mathcal{U}^{(1)}$  denote  the  \survg users and relays  in the second \rd. To enable correct recovery, the \survg users in the second \rd will send additional \msgs via their \assod relays.
After two \rds of \comms, the server aims to recover the sum of the inputs of the  surviving users in the first round, \ie, 
$\sum_{(u,v) \in \bigcup_{u \in \mathcal{U}^{(1)}} \mathcal{V}^{(1)}_u} W_{u,v}$.

Secure aggregation in this model must satisfy both the correctness and \secty constraints \af.

\textbf{Correctness:} For all possible surviving sets, the server can correctly recover the desired sum using the received messages.

\textbf{Security:}
\begin{itemize}
    \item \emph{Server security:} Even if the server colludes with up to $T$ users, it cannot infer any additional information about the input set  beyond the colluding users' inputs and the input sum.
    \item \emph{Relay security:} Any Relay $u$, colluding with up to $T$ users, cannot infer additional information about the entire input set beyond the inputs of those users.
\end{itemize}

The goal of this work is to characterize the optimal \comm rates over the two \rds and two \ntwk hops subject to correctness and \secty constraints. Specifically, we determine the minimum number of symbols that must be transmitted over each round and hop to securely compute one symbol of the sum. We establish a threshold phenomenon: if the number of surviving users does not exceed the collusion threshold, secure aggregation is infeasible. Otherwise, we provide matching lower and upper bounds for the first round and for the first hop in the second round, and derive gap bounds for the second hop in the second round.

Prior works have explored extensions of secure aggregation~\cite{jahani2023swiftagg+,zhang2024optimal,10806947,egger2024privateaggregationhierarchicalwireless,zhang2025fundamental,11195652,egger2023private,lu2024capacity}. However, a sharp information-theoretic characterization of hierarchical secure aggregation with simultaneous dropouts and dual security remains an open problem. 
Our achievability schemes leverage classical secure multiparty computation techniques, including linear secret sharing and masking~\cite{BGW, CCD, Chor_Kushilevitz, Kushilevitz_Rosen}, adapted to the hierarchical and dropout-constrained setting. Converse results rely on Shannon's secrecy framework~\cite{Shannon1949} and extend entropy-based impossibility techniques from private information retrieval~\cite{Sun_Jafar_SPIR, Jia_Sun_Jafar, guo2020information, cheng2020capacity, wang2018e, wang2019symmetric,zhang2021fundamental}.

Compared to existing secure aggregation protocols~\cite{user_held, aggregation, aggregation_log, aggregation_turbo, aggregation_fast, bonawitz2019federated, Choi_Sohn_Han_Moon, pillutla2019robust, xu2019hybridalpha, beguier2020safer, so2020byzantine, elkordy2020secure, guo2020secure, alexandru2020private, lia2020privacy, truong2020privacy}, our work differs in three key aspects: (i) information-theoretic perfect security guarantees, rather than relying on computational assumptions; (ii) explicit hierarchical multi-hop modeling with user and relay dropouts; and (iii) characterization of the fundamental communication limits through matching achievability and converse bounds.

\subsection{Summary of Contributions}
The main contributions of this paper are summarized \af:

\begin{itemize}
\item  We propose a two-round hierarchical secure aggregation model that simultaneously captures user and relay dropouts and dual information-theoretic security against collusion with up to $T$ users by either the server or any single relay. This model introduces a four-dimensional rate tuple that quantifies the communication cost across rounds and hops in hierarchical networks.

\item Under a delayed message availability model, we establish necessary and sufficient feasibility conditions and fully characterize the optimal \tit{first-round} communication rates under the correctness and information-theoretic security constraints.

\item For the \tit{second round}, we characterize the optimal user-to-relay \comm rate and derive information-theoretic lower and upper bounds on the optimal relay-to-server rate. Although these bounds do not coincide in general, they are tight in several regimes of interest, yielding a near-complete characterization of the optimal communication cost.

\item Our results uncover a sharp threshold phenomenon: secure aggregation is feasible if and only if the total number of surviving users across surviving relays  exceeds the collusion threshold $T$. Achievability is achieved via vector linear coding with structured  and correlated randomness for secret key generation, while the converse follows from entropic secrecy arguments.
\end{itemize}



\section{Problem Statement}
\label{sec:model}


Consider secure aggregation in a three-layer hierarchical network consisting of an aggregation server, an intermediate layer with $U\ge 2$ relays, and a bottom layer of $UV$ users, where each relay serves a disjoint cluster of $V$ users. The network operates over two hops: the server communicates with all relays, and each relay communicates with its associated users 
\begin{figure}[h]
\centering
\includegraphics{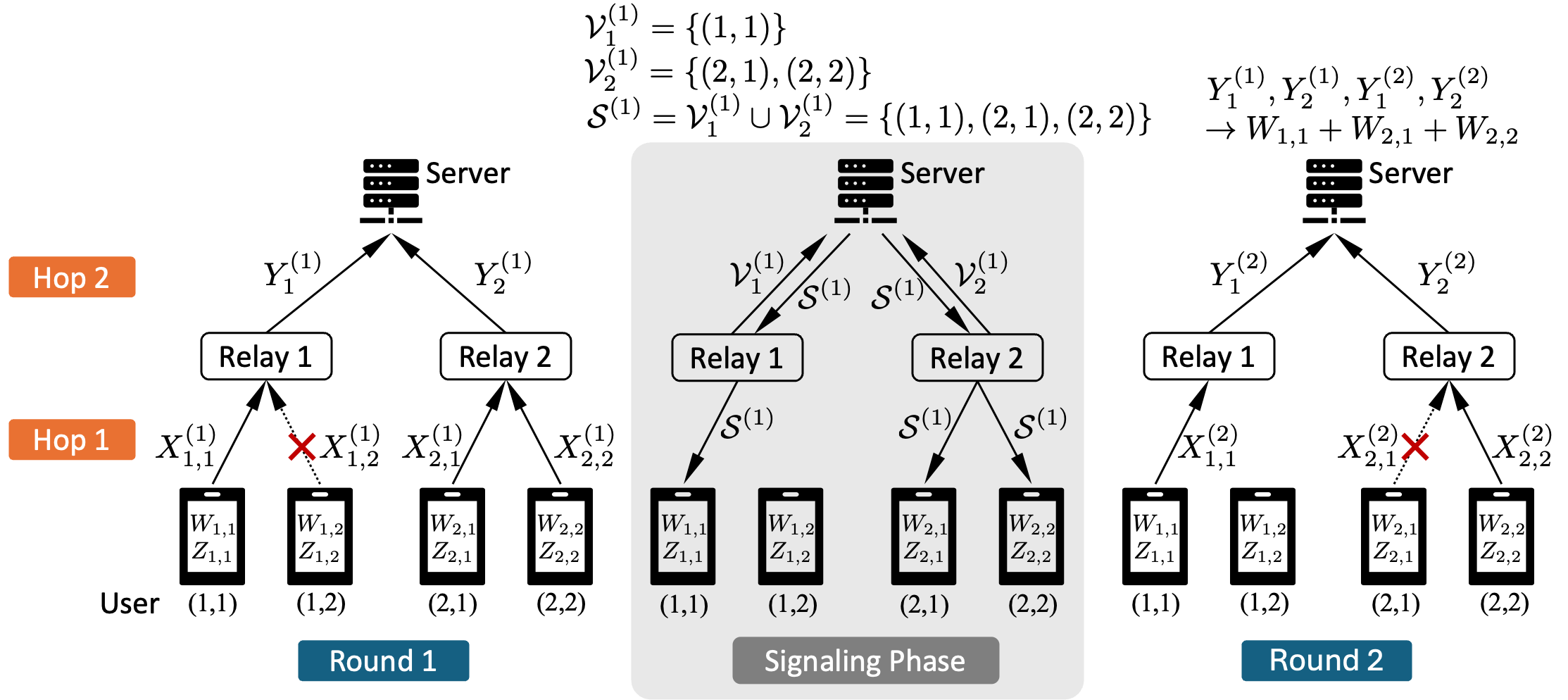}
\caption{\small Example of robust secure aggregation with $U=2$ relays and $UV=4$ users. 
In Round~1, User~$(1,2)$ drops out. 
During the signaling phase, the surviving relays report their surviving-user sets $\mathcal{V}^{(1)}$ to the server. 
The server then determines the first-round surviving-user set $\mathcal{S}^{(1)}$ and broadcasts it back to the surviving users via the surviving relays. 
This signaling phase is necessary because users must know the identities of the surviving users in the first round in order to generate subsequent messages. 
In Round~2, User~$(2,1)$ drops out. 
The server aims to securely compute $W_{1,1} + W_{2,1} + W_{2,2}$.}
\label{fig:model}
\end{figure}
(see Fig.~\ref{fig:model} for an example with $U=2$ and $V=2$). All communication links are assumed to be error-free. 
The $v^{\rm th}$ user associated with the $u^{\rm th}$ relay is labeled as $(u,v)\in[U]\times[V]$. Let
$\mathcal{M}_u \triangleq \{(u,v): v\in[V]\}$
denote the user cluster served by Relay $u$.
Each User $(u,v)$ possesses an input $W_{u,v}\in \mathbb{F}_q^{L}$, such as local gradients or model parameters in federated learning.
The user inputs $W_{[U]\times[V]} \triangleq \{W_{u,v}\}_{(u,v)\in[U]\times[V]}$ are assumed to be uniformly distributed  and mutually independent\footnote{The assumptions of input uniformity and independence are only used to establish the converse bounds. The proposed secure aggregation schemes ensures  security for arbitrarily distributed and correlated inputs.}. 
To protect the inputs,
each User $(u,v)$ is also equipped with a  secret key $Z_{u,v}$ of entropy $H(Z_{u,v})=L_Z$.
The individual keys $Z_{[U]\times[V]} \triangleq \{Z_{u,v}\}_{(u,v)\in[U]\times[V]}$ are generated from a \tit{source key} $Z_\Sigma \in \mathbb{F}_q^{L_{Z_\Sigma}}$ such that
$H(Z_{[U]\times[V]} \mid Z_\Sigma)=0$.\footnote{We assume the existence of a trusted third-party entity responsible for generating and distributing the individual keys to the users.} The source key $Z_\Sigma$ is only available to the trusted key generator and is not accessible to any user, relay, or the server.
Moreover, the keys are independent of the user inputs, i.e.,
\begin{align}
H\left(Z_{[U]\times[V]}, W_{[U]\times[V]}\right)
 & = H\left(Z_{[U]\times[V]}\right) + \sum_{u\in[U],v\in[V]} H(W_{u,v}), \label{independent}\\
 H(W_{u,v}) & = L \quad \text{(in $q$-ary units)}, \ \forall (u,v)\in[U]\times[V]. \label{inputsize}
\end{align}

\subsection{\Comm Protocol}
\label{subsec:comm protocol,problem formu}

The communication takes place over \tit{two rounds}, each consisting of two hops.

\tbf{1) First \rd.}
In the first round, over the first hop, User $(u,v)$ transmits a message $X^{(1)}_{u,v}$ to its associated Relay $u$.
The message $X^{(1)}_{u,v}$ consists of $L_X^{(1)}$ symbols over the finite field $\mathbb{F}_q$ and is generated as a deterministic function of the local input $W_{u,v}$ and the secret key $Z_{u,v}$, i.e.,
\begin{eqnarray}
H\left(X^{(1)}_{u,v} \mid W_{u,v}, Z_{u,v}\right)=0,
\quad \forall (u,v)\in[U]\times[V].
\label{messageX1}
\end{eqnarray}
After the first hop of the first round, an arbitrary subset of users may drop out.
For Relay $u$, $u\in[U]$, let $\mathcal{V}^{(1)}_u \subset \mathcal{M}_u$
denote the set of surviving users after the first hop in the first round, which can be an arbitrary subset of cardinality at least $V_{0}$, i.e.,
$|\mathcal{V}^{(1)}_u| \ge V_{0}$, where $1 \le V_{0} \le V-1$.

Over the second hop, Relay $u\in[U]$ generates and sends a message $Y^{(1)}_u$ to the aggregation server as a deterministic function of the messages $\{X^{(1)}_{u,v}\}_{(u,v)\in\mathcal{V}^{(1)}_u}$ received from the surviving users.
The message $Y^{(1)}_u$ consists of $L^{(1)}_Y$ symbols over the finite field $\mathbb{F}_q$. Thus,
\begin{eqnarray}
H\left(Y^{(1)}_u \mid \{X^{(1)}_{u,v}\}_{(u,v)\in\mathcal{V}^{(1)}_u}\right)=0,
\quad \forall u\in[U].
\label{messageY1}
\end{eqnarray}

Relays may also drop out after the second hop of the first round.
Let $\mathcal{U}^{(1)}$ denote the set of relays that remain active after the first round,
which can be any subset of $[U]$ with cardinality at least $U_{0}$, where $1 \le U_{0} \le U-1$.
The server then receives the messages $\{Y^{(1)}_u\}_{u\in\mathcal{U}^{(1)}}$ and aims to  compute
the sum of the inputs of \tit{all surviving users in the first \rd.}
For brevity of notation, denote
\be
\mrm{[\Survg\;\,users \;\,in \;\,1^{st} \;\,\rd]} \quad
\mathcal{S}^{(1)} \eqdef \bigcup_{u\in\mathcal{U}^{(1)}} \mathcal{V}^{(1)}_u
\ee
as the set of  the \survg  users in the first \rd.
The  desired sum is then equal to
$\sum_{(u,v)\in\mathcal{S}^{(1)}} W_{u,v}$.

\begin{remark}\tit{Each Relay $u \in [U]$ reports to the server the set of its surviving users $\mathcal{V}^{(1)}_{u}$. Based on the set of surviving relays $\mathcal{U}^{(1)}$ and the reported user sets $\{\mathcal{V}^{(1)}_{u}\}_{u \in \mathcal{U}^{(1)}}$, the server determines the complete set of surviving users in the first round,
$\bigcup_{u \in \mathcal{U}^{(1)}} \mathcal{V}^{(1)}_{u}$.
The server then broadcasts this set to all surviving relays, which forward it to their associated surviving users. This step informs each surviving user of the set of surviving users in the first round, enabling them to generate the appropriate second-round messages.}
\end{remark}

\tbf{2) Second \rd.}
In the second round, over the first hop, each User $(u,v)$ that is active 
at the beginning of the second round, i.e., each $(u,v) \in \mathcal{S}^{(1)}$, 
generates and transmits a message $X^{(2)}_{u,v}$,
which is a deterministic function of $W_{u,v}$ and $Z_{u,v}$,
and consists of $L^{(2)}_X$ symbols from $\mathbb{F}_q$\footnote{We consider the worst-case message length $L^{(2)}_X$ over all users and all first-round surviving user sets.}.  
\begin{eqnarray}
H\left(X_{u,v}^{(2)} \mid W_{u,v}, Z_{u,v}\right) = 0, \quad \forall (u,v) \in \mathcal{S}^{(1)}.
\label{messageX2}
\end{eqnarray}
After the first hop of the second round, an arbitrary subset of users may drop out.
For Relay $u\in[U]$, let $\mathcal{V}^{(2)}_u \subseteq \mathcal{V}^{(1)}_u$
denote the set of surviving users after the first hop in the second round.
The set $\mathcal{V}^{(2)}_u$ can be any subset satisfying
$|\mathcal{V}^{(2)}_u| \ge V_{0}$.

Over the second hop, each surviving Relay $u \in \mathcal{U}^{(1)}$ sends a message $Y^{(2)}_u$ to the aggregation server. After the second round, an arbitrary subset of relays may drop out.
Let $\mathcal{U}^{(2)} \subseteq \mathcal{U}^{(1)}$
denote the set of relays that remain active after the second round,
where we assume $|\mathcal{U}^{(2)}| \ge U_{0}$.
This message is a deterministic function of the messages sent by its surviving users:
\begin{eqnarray}
H\Big(Y^{(2)}_u \,\big|\, \{X_{u,v}^{(2)}\}_{(u,v)\in \mathcal{V}^{(2)}_u}\Big) &=& 0, 
\quad \forall u \in \mathcal{U}^{(1)}. \label{messageY2}
\end{eqnarray}
The message $Y^{(2)}_u$ consists of $L^{(2)}_Y$ symbols over the finite field $\mathbb{F}_q$.

\subsection{Correctness and \Secty}
\label{subsec:correctness and security,problem formu}

From the messages received from the surviving relays,
the aggregation server must be able to recover the desired sum $\sum_{(u,v)\in\mathcal{S}^{(1)}} W_{u,v}$ with zero error.
Formally, for any relay survival sets $\mathcal{U}^{(1)}$ and $\mathcal{U}^{(2)}$ satisfying
$\mathcal{U}^{(2)} \subseteq \mathcal{U}^{(1)} \subseteq [U]$ and $|\mathcal{U}^{(2)}| \ge U_{0}$,
and for any corresponding user survival sets $\mathcal{V}^{(1)}_u$ and $\mathcal{V}^{(2)}_u$ satisfying
$\mathcal{V}^{(2)}_u \subseteq \mathcal{V}^{(1)}_u \subseteq \{(u,v)\}_{v\in[V]}$ and
$|\mathcal{V}^{(2)}_u| \ge V_{0}$ for all $u \in \mathcal{U}^{(2)}$,
the following correctness constraint must hold:
\begin{eqnarray}
\mbox{[Correctness]}~~~ 
H\left(
\sum_{(u,v) \in \mathcal{S}^{(1)}} W_{u,v}
\;\Bigg|\;
\{Y^{(1)}_u\}_{u\in\mathcal{U}^{(1)}}, 
\{Y^{(2)}_u\}_{u\in\mathcal{U}^{(2)}}
\right) = 0.
\label{correctness}
\end{eqnarray}

The security constraints require that 
(i) \tit{Relay security:}  each relay should not obtain any information about the users' inputs $W_{[U]\times[V]}$, and 
(ii) \tit{Server security:} the aggregation server cannot learn any information about $W_{[U]\times[V]}$ beyond the desired sum 
$\sum_{(u,v)\in \mathcal{S}^{(1)}} W_{u,v}$, 
even if a relay or the server colludes with any set $\mathcal{T}$ of at most $T$ users.
For any Relay $u\in[U]$ and any colluding user set $\mathcal{T}$ with $|\mathcal{T}|\le T$, 
under the delayed message availability model (see \Rmk \ref{rmk:delayed msg model}), 
the relay security constraint is given by
\begin{align}
\mbox{[Relay security]}~~~I\Big(
\left\{X^{(1)}_{u,v}\right\}_{(u,v)\in \mathcal{M}_u},
\left\{X^{(2)}_{u,v}\right\}_{(u,v)\in \mathcal{V}^{(1)}_{u}};
\, W_{[U]\times [V]}
\;\Big|\;
\{W_{u,v}, Z_{u,v}\}_{(u,v)\in \mathcal{T}}
\Big)=0.
\label{relaysecurity}
\end{align}

\begin{remark}[Delayed Message Availability Model at the Relays]
\label{rmk:delayed msg model}
\tit{This corresponds to a worst-case adversarial model and leads to a strictly stronger security notion.
Throughout the security analysis, we adopt a delayed message availability model.
Specifically, a relay is assumed to eventually obtain all uplink messages transmitted
by users that were active at the beginning of each round, even if those users drop out
during the round.
As a result, in the first round Relay $u$ is assumed to observe
$\{X^{(1)}_{u,v}\}_{(u,v)\in\mathcal{M}_u}$,
and in the second round it is assumed to observe
$\{X^{(2)}_{u,v}\}_{(u,v)\in\mathcal{V}^{(1)}_u}$.
This assumption leads to a strictly stronger relay security requirement.
This model corresponds to a worst-case timing assumption on message delivery and dropout events.}
\end{remark}

For any colluding user set $\mathcal{T}$ with $|\mathcal{T}|\le T$, 
the server security constraint requires that
\begin{equation}  
\begin{aligned}[t]
\mbox{[Server security]}~~~I\Bigg( \left\{Y^{(1)}_u\right\}_{u\in [U]}, \left\{Y^{(2)}_u\right\}_{u\in \mathcal{U}^{(1)}}; W_{[U]\times [V]} \Bigg|\sum_{(u,v)\in \mathcal{S}^{(1)}} W_{u,v}, \{W_{u,v}, Z_{u,v}\}_{(u,v)\in \mathcal{T}} \Bigg)=0. \label{serversecurity}
\end{aligned}
\end{equation}

\begin{remark}[Delayed Message Availability at the Server]
\tit{In the server security constraint, the aggregation server is assumed to have access to
all relay messages transmitted in the first round,
including those from relays that subsequently drop out,
as well as all second-round messages from relays that remain active after the first round.
This models a delayed message availability scenario in which messages from dropped relays
may be delayed and eventually become available to the server.
Under this assumption, the server is allowed to observe
$\{Y^{(1)}_u\}_{u\in[U]}$ and $\{Y^{(2)}_u\}_{u\in\mathcal{U}^{(1)}}$,
leading to a strictly stronger server security requirement.}
\end{remark}

\subsection{Communication Rates and Achievable Region}

The communication \emph{rate} characterizes the number of transmitted symbols per input symbol and is defined as
\begin{equation}
R^{(1)}_X \triangleq \frac{L^{(1)}_X}{L}, \quad
R^{(1)}_Y \triangleq \frac{L^{(1)}_Y}{L}, \quad
R^{(2)}_X \triangleq \frac{L^{(2)}_X}{L}, \quad
R^{(2)}_Y \triangleq \frac{L^{(2)}_Y}{L}.
\label{rate}
\end{equation}
Here, $R^{(1)}_X$ and $R^{(1)}_Y$ denote the message rates of the first and second hops in the first round, respectively, while $R^{(2)}_X$ and $R^{(2)}_Y$ denote the message rates of the first and second hops in the second round.
A rate tuple $(R^{(1)}_X, R^{(1)}_Y, R^{(2)}_X, R^{(2)}_Y)$ is said to be \emph{achievable} if there exists a secure aggregation scheme, i.e., a construction of the secret keys 
$\{Z_{u,v}\}_{(u,v)\in[U]\times[V]}$, the messages 
$\{X^{(1)}_{u,v}\}_{(u,v)\in[U]\times[V]}$, 
$\{Y^{(1)}_u\}_{u\in [U]}$, 
$\{X^{(2)}_{u,v}\}_{(u,v)\in\mathcal{V}^{(1)}_u}$ and 
$\{Y^{(2)}_u\}_{u\in\mathcal{U}^{(1)}}$, 
such that the correctness constraint \eqref{correctness} and  the  security constraints \eqref{relaysecurity}, \eqref{serversecurity} are satisfied.
The \tit{optimal} \ratreg, denoted by $\Rc^*$, is defined as the closure of all \achvb rate tuples.  
In addition, let  $R_X^{(1),*}, R_Y^{(1),*}, R_X^{(2),*}$, and $R_Y^{(2),*}$ denote the \tit{individually} minimum rates, \ie, $R_l^{(i),*}\eqdef \min\{R_l^{(i)}:(R^{(1)}_X, R^{(1)}_Y, R^{(2)}_X, R^{(2)}_Y)\in \Rc^*  \}, l \in \{X,Y\},i=1,2$.

\section{Main Results}
\label{sec:result}

\begin{theorem}
\label{thm:main}
For hierarchical secure aggregation with $U$ relays, $V$ users per relay, dropout thresholds $V_0$ and $U_0$, and collusion threshold $T$, the optimal \ratreg  $\Rc^*$ is given by
\begin{eqnarray}
\label{eq: optimal rate region, thm 1}
\mathcal{R}^* =\left\{ 
\begin{array}{ll}
&
\left\{\begin{array}{ll}
\left(R_X^{(1)}, R_Y^{(1)}, R_X^{(2)}, R_Y^{(2)}\right) :
R_X^{(1)} \ge 1   ,
R_Y^{(1)} \ge 1,
R_X^{(2)} \ge  \frac{V_{0}}{U_{0}V_{0} - T}, 
R_Y^{(2)} \ge R_Y^{(2),*}
\end{array}\right\},\quad  \textrm{if } U_{0}V_{0} > T \\
&~~~~~~~~~~~~ \emptyset,\quad   \hfill \textrm{if } U_{0}V_{0} \le T
\end{array}\right.
\end{eqnarray}
where 
\be
\label{eq:Ry2 bound,thm}
\frac{1}{U_{0}-\left\lfloor {T}/{V_{0}} \right\rfloor} \le R_Y^{(2), *} \le \frac{1}{U_{0} - {T}/{V_{0}}}.
\ee
\end{theorem}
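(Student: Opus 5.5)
I will prove Theorem~\ref{thm:main} in three parts: infeasibility when $U_0V_0\le T$, converse bounds on each of the four rates when $U_0V_0>T$, and a single linear scheme attaining $R_X^{(1)}=R_Y^{(1)}=1$, $R_X^{(2)}=\frac{V_0}{U_0V_0-T}$ and $R_Y^{(2)}=\frac{1}{U_0-T/V_0}$.

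\emph{Infeasibility and first-round converses.} Suppose $U_0V_0\le T$. Choose a dropout pattern with exactly $U_0$ surviving relays, each keeping exactly $V_0$ users in both rounds. Let the colluding set $\mathcal{T}$ be all $U_0V_0$ of these users. By correctness \eqref{correctness}, the server decodes the sum $\sum_{\mathcal{S}^{(1)}}W_{u,v}$ from messages it also observes under \eqref{serversecurity}. I will instead fix a single surviving user $(u,v)$ and take $\mathcal{T}=\mathcal{S}^{(1)}\setminus\{(u,v)\}$, which has size $U_0V_0-1<T$. I will then compare with an alternative pattern in which $(u,v)$ drops out after round~1. Relay security \eqref{relaysecurity} forces $X^{(1)}$ to be independent of the inputs even given the colluders. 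Combining this with correctness, the decodability of $W_{u,v}$ from the round-2 messages of colluders plus relays would contradict server security unless some honest user outside $\mathcal{T}$ survives. The cleanest route is a chain: $L=H(W_{u,v})\le I(\text{server view};W_{u,v}\mid \text{colluders})$, which server security forces to be $0$, since the ``sum'' is then known to equal colluders plus $W_{u,v}$. Because $|\mathcal{T}|\le T$, the sum is itself a function of the conditioning variables, so server security gives zero leakage about $W_{u,v}$ while correctness gives full leakage. This contradiction is the threshold argument. For $R_X^{(1)}\ge1$ and $R_Y^{(1)}\ge1$, I will use the standard argument that $X^{(1)}_{u,v}$ and $Y^{(1)}_u$ must carry $W_{u,v}$. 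Consider a pattern where user $(u,v)$ survives round~1 but every other user in its cluster has already been accounted for, and the round-2 messages from $(u,v)$ are absent. Then $H(X^{(1)}_{u,v})\ge I(X^{(1)}_{u,v};W_{u,v}\mid\cdot)\ge L$, conditioning on all other inputs and the second-round messages.

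\emph{Second-round converses.} For $R_X^{(2)}$, fix relay $u$ with exactly $V_0$ round-2 survivors, other relays chosen so that $U_0$ survive, and a colluding set of size $T$ placed so as to maximize the number of fully colluding relays. The key entropic steps are: (i) relay security \eqref{relaysecurity} implies round-1 messages reveal nothing, so the desired sum must be conveyed, together with key cancellation, through round~2; (ii) by the delayed-availability model, the round-2 messages of all $V_0$ users of relay $u$ must allow removing the keys of users that dropped out. Summing over $V_0$ users and using a symmetrization over collusion sets gives $V_0L_X^{(2)}\ge \frac{V_0}{U_0V_0-T}\cdot V_0L$ per user, via the Han-type averaging $\sum H(X_i)\ge\frac{|\mathcal{A}|}{|\mathcal{A}|-T}H(\cdot)$ over the $U_0V_0$ honest-or-colluding survivors. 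For $R_Y^{(2)}$, I would repeat this at the relay level. Any $\lfloor T/V_0\rfloor$ relays can be fully colluded, so among $U_0$ relays, $U_0-\lfloor T/V_0\rfloor$ honest relays must jointly carry $L$ fresh symbols. Averaging then yields $L_Y^{(2)}\ge L/(U_0-\lfloor T/V_0\rfloor)$.

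\emph{Achievability.} I will split each input into $U_0V_0-T$ pieces. In round~1, user $(u,v)$ sends $W_{u,v}+Z^{(1)}_{u,v}$ with a one-time-pad key, and relay $u$ forwards the sum over its survivors, giving rate $1$. The first-round keys $Z^{(1)}_{u,v}$ are chosen as evaluations of a random polynomial, i.e.\ an MDS (Shamir-like) structure. The structure is such that $\sum_{\mathcal{S}^{(1)}}Z^{(1)}_{u,v}$ is recoverable from any $U_0V_0$ shares of round-2 keys, while any $T$ shares are uniform. In round~2, each user sends a share of size $\frac{V_0}{U_0V_0-T}L$, and relays linearly combine these so that each relay's output of size $\frac{L}{U_0-T/V_0}$ is a codeword symbol of an MDS code over relays. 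The main obstacle will be designing this correlated randomness so that it simultaneously satisfies three requirements: (a) the relay outputs remain decodable for every $\mathcal{U}^{(2)}$ and every $\mathcal{V}^{(2)}_u$; (b) relay security holds given all $|\mathcal{V}^{(1)}_u|$ second-round messages plus $T$ colluders; and (c) the server learns only the sum. To handle this, I would first try a two-level nested MDS (product-code) key structure and verify the security conditions by rank arguments over a sufficiently large $\mathbb{F}_q$.
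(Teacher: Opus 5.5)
Your proposal has genuine gaps in the infeasibility argument, the second-round converse and the achievability; only the first-round converses are sound.

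\textbf{Infeasibility and second-round converses.} Your infeasibility argument does not produce a contradiction. Server security \eqref{serversecurity} is conditioned on the permitted sum. With $\mathcal{T}=\mathcal{S}^{(1)}\setminus\{(u,v)\}$, the server is therefore \emph{entitled} to learn $W_{u,v}$, since it equals the sum minus the colluders' inputs. The ``zero leakage about $W_{u,v}$'' you extract holds only given the sum, where it is trivial, while correctness only gives unconditional leakage.

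The missing idea is counterfactual simulation, which the paper uses for infeasibility and for both second-round converses. Because $X^{(2)}_{u,v}$ is a function of $(W_{u,v},Z_{u,v})$ and the announced set, a server colluding with all survivors of $U_0$ relays can compute the round-2 messages that \emph{would} be sent under a different first-round survival pattern. Together with the first-round relay messages it holds, correctness for that other pattern lets it decode a partial sum different from the permitted one. Lemma~\ref{lemma:indc} makes this partial sum independent of the permitted sum and the colluders' data, so \eqref{serversecurity} forces $0\ge L$.

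Your step (i) has the same hole. Relay security constrains one relay's view at a time and does not give the joint independence you need. The paper obtains it from server security in the scenario $\mathcal{U}^{(1)}=[U_0+1]$, $\mathcal{U}^{(2)}=[U_0]$, $\mathcal{T}\subset[U_0]\times[V_0]$, together with Lemma~\ref{lemma:indc}: $I\big(\sum_{(u,v)\in[U_0]\times[V_0]}W_{u,v};\{Y^{(1)}_u\}_{u\in[U_0]},\{X^{(2)}_{u,v}\}_{(u,v)\in\mathcal{T}}\big)=0$. Correctness for $\mathcal{U}^{(1)}=\mathcal{U}^{(2)}=[U_0]$ then gives $L\le (U_0V_0-T)L^{(2)}_X$. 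The analogous chain with $\lfloor T/V_0\rfloor$ fully colluding relays gives the $R^{(2)}_Y$ bound.

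This yields $R^{(2)}_X\ge\frac{1}{U_0V_0-T}$, not $\frac{V_0}{U_0V_0-T}$, and your symmetrization to the larger bound cannot be repaired. The paper's scheme sends the single scalar $\sum_{(i,j)\in\mathcal{S}^{(1)}}[\boldsymbol{Q}_{i,j}]_{u,v}$ per user with $L=U_0V_0-T$. It attains $\frac{1}{U_0V_0-T}$ and is secure, e.g., for $T=0$ and $U_0\ge 2$. So the $V_0$ in the numerator of the displayed region is inconsistent with the paper's own converse and scheme.

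\textbf{Achievability.} You give a plan rather than a construction. Making relay outputs an MDS code across relays also leaves open how a relay's linear combination can work for every $\mathcal{V}^{(2)}_u$. The paper sidesteps this by having each relay forward any $V_0$ received scalars. Any $U_0$ relays then deliver $U_0V_0$ distinct evaluations of $\sum_{(i,j)\in\mathcal{S}^{(1)}}(N_{i,j},S_{i,j})$ under a $T$-private MDS matrix $\boldsymbol{\alpha}\in\mathbb{F}_q^{U_0V_0\times UV}$.

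More seriously, your requirement (b) cannot be met by \emph{any} scheme when $T\ge (U_0-1)V_0$. Take $\mathcal{U}^{(1)}=\mathcal{U}^{(2)}$ of size $U_0$ containing Relay $u$, set every $\mathcal{V}^{(1)}_{u'}=\mathcal{V}^{(2)}_{u'}$ of size $V_0$, and let $\mathcal{T}$ be the $(U_0-1)V_0$ survivors of the other relays. Relay $u$ observes everything $Y^{(1)}_u$ and $Y^{(2)}_u$ depend on, and the colluders' $(W,Z)$ determine the other relays' messages. By \eqref{correctness}, Relay $u$ therefore decodes the sum and hence the partial sum of its own survivors, violating \eqref{relaysecurity}.

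So the statement's feasibility threshold $U_0V_0>T$ is too optimistic on $(U_0-1)V_0\le T<U_0V_0$, which includes every case with $U_0=1$. The paper's scheme fails there too. In Example~2 with $\mathcal{S}^{(1)}=\{(1,1),(1,2),(2,1),(2,2)\}$ and $\mathcal{T}=\{(2,1),(2,2)\}$, Relay~1 and the colluders hold four MDS evaluations of $\sum_{(i,j)\in\mathcal{S}^{(1)}}(N_{i,j},S_{i,j})$. They decode it, remove $N_{2,1}+N_{2,2}$, and recover $W_{1,1}+W_{1,2}$ from $X^{(1)}_{1,1}+X^{(1)}_{1,2}$.

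\textbf{First round.} Your first-round converse is essentially the paper's (a user, resp.\ relay, that survives round~1 but drops in round~2) and is fine.
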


We provide  intuitive explanations  to Theorem~\ref{thm:main} \af:
\begin{itemize}
\item[\tit{1)}] \tit{Infeasibility:}
When $T \ge U_0 V_0$, i.e., the minimum number of surviving users is no greater than the collusion threshold, the information-theoretic secure aggregation problem is infeasible. In this regime, the number of independent user contributions is insufficient to simultaneously provide the randomness required for security and the degrees of freedom required for correct decoding. Hence, the correctness constraint (\ref{correctness}) and the security constraints (\ref{relaysecurity})(\ref{serversecurity}) cannot be satisfied simultaneously.

\item[\tit{2)}] \tit{First round rates $R_X^{(1),*}=R_Y^{(1),*}=1$:}
In the first round, each user and relay essentially transmits its input in full. Since this round is used to collect raw information from all users before any dropout occurs, no coding gain is possible. Hence, the rates are equal to $1$.

\item[\tit{3)}] \tit{Second round user rate $R_X^{(2),*} = \frac{V_{0}}{U_{0}V_{0}-T}$:} 
In the second round, surviving users send enough information so that the server can reconstruct the sum of all surviving inputs while remaining secure against any collusion of at most $T$ users. 
The numerator $V_0$ represents the minimum number of users per relay that survive, and the denominator $U_0 V_0 - T$ corresponds to the effective number of independent contributions after accounting for colluding users. This choice ensures information-theoretic security while minimizing redundancy.



\item[\tit{4)}] \tit{Second round relay rate $R_Y^{(2),*}$ bounds:} 
Each relay aggregates its surviving users' \msgs\ and forwards them to the server. 
The bounds 
$\frac{1}{U_0 - \lfloor T/V_0 \rfloor} \le R_Y^{(2),*} \le \frac{1}{U_0 - T/V_0}$
consist of a converse lower bound $R_Y^{(2),*} \ge \frac{1}{U_0 - \lfloor T/V_0 \rfloor} $ and an achievable upper bound $R_Y^{(2),*} \le \frac{1}{U_0 - T/V_0}$. 
Note that the upper and lower bounds coincide when either $V_0$ divides $T$ or $T=0$ (no collusion), thus characterizing the \tit{optimal} rate  region in  such scenarios.

\end{itemize}

\section{Achievability Proof of Theorem \ref{thm:main}}
\label{sec:ach}
Prior to the formal proof of the general achievability, we demonstrate the approach through two illustrative examples. The method is relatively simple and is based on standard vector linear coding techniques.

\subsection{Example 1: $U = 2, V = 2, U_{0} = 2, V_{0} = 1, T = 0$}

Consider $UV = 4$ users. Each relay has at least $V_{0} = 1$ responding user, and the server receives responses from at least $U_{0} = 2$ relays. There is no collusion between the users and the server or the relays, i.e., $T = 0$.  
The input length is set to $L = U_{0}V_{0} - T = 2$, so that each message satisfies $W_{u,v} \in \mathbb{F}_q^{2 \times 1}$.

To specify the correlated randomness, generate 4 i.i.d. uniform random vectors
$\{N_{u,v}\}_{(u,v)\in[2]\times[2]}$, where $N_{u,v} \in \mathbb{F}_q^{2 \times 1}$. 
We construct linear combinations corresponding to the sums of all subsets of 
$\{N_{1,1}, N_{1,2}, N_{2,1}, N_{2,2}\}$ with cardinality at least $U_{0}V_{0} = 2$.
For each User $(u,v)$, define the randomness variable as
\begin{equation}
Z_{u,v} =
\Big( N_{u,v}, \big\{ N_{i,j}(1) + 2^{2u+v-3} N_{i,j}(2) \big\}_{(i,j)\in[2]\times[2]} \Big), \quad \forall (u,v)\in[2]\times[2]. \label{1keymessage1}
\end{equation}

\textbf{Example:} For $(u,v) = (2,1)$, we have
\begin{equation}
\begin{aligned}
Z_{2,1}
=
(& N_{2,1},\,
N_{1,1}(1) + 2^{2} N_{1,1}(2),\,
N_{1,2}(1) + 2^{2} N_{1,2}(2), \\
 & N_{2,1}(1) + 2^{2} N_{2,1}(2),\,
N_{2,2}(1) + 2^{2} N_{2,2}(2)
).
\end{aligned}
\end{equation}

This completes the construction of the correlated secret keys.
We next describe the message design over two communication rounds.

\textbf{First hop, first round:} each user transmits its input masked by local noise:
\begin{equation}
    X^{(1)}_{u,v} = W_{u,v} + N_{u,v}, \quad \forall (u,v)\in [2]\times[2]. \label{1X1}
\end{equation}

\textbf{Second hop, first round:} each relay aggregates messages from its surviving users $\mathcal{V}^{(1)}_u \subseteq \mathcal{M}_u$:
\begin{equation}
    Y^{(1)}_u = \sum_{(u,v)\in \mathcal{V}^{(1)}_u} \bigl(W_{u,v}+N_{u,v}\bigr), \quad u\in [2].
\end{equation}

\textbf{Example surviving user combinations:}
\begin{itemize}
    \item $\mathcal{V}^{(1)}_1 = \{(1,1)\}, \mathcal{V}^{(1)}_2 = \{(2,2)\}$: 
    \begin{eqnarray}
        Y_1^{(1)} = W_{1,1}+N_{1,1},\quad Y_2^{(1)} = W_{2,2}+N_{2,2}.\label{Y1ex1t1}
    \end{eqnarray}
    \item $\mathcal{V}^{(1)}_1 = \{(1,1),(1,2)\}, \mathcal{V}^{(1)}_2 = \{(2,1)\}$: 
    \begin{eqnarray}
        Y_1^{(1)} = W_{1,1}+W_{1,2}+N_{1,1}+N_{1,2},\quad Y_2^{(1)} = W_{2,1}+N_{2,1}.\label{Y1ex1t2}
    \end{eqnarray}
    \item $\mathcal{V}^{(1)}_1 = \{(1,1),(1,2)\}, \mathcal{V}^{(1)}_2 = \{(2,1),(2,2)\}$: 
    \begin{eqnarray}
        Y_1^{(1)} = W_{1,1}+W_{1,2}+N_{1,1}+N_{1,2},\quad Y_2^{(1)} = W_{2,1}+W_{2,2}+N_{2,1}+N_{2,2}.\label{Y1ex1t3}
    \end{eqnarray}
\end{itemize}

After the first round, each Relay $u\in [2]$ reports to the server the set of its surviving users $\mathcal{V}^{(1)}_u$.
Based on the set of surviving relays $\mathcal{U}^{(1)}$ and the reported user sets
$\{\mathcal{V}^{(1)}_u\}_{u\in\mathcal{U}^{(1)}}$,
the server determines the complete set of users that survive the first round, denoted by $\mathcal{S}^{(1)} \triangleq \cup_{u\in\mathcal{U}^{(1)}} \mathcal{V}^{(1)}_u$.
The server then broadcasts $\mathcal{S}^{(1)}$ to all surviving relays, which forward it to their associated surviving users and request the transmission of the second-round messages.
Since $U = U_{0} = 2$, we have $\mathcal{U}^{(1)} = \mathcal{U}^{(2)} = \{1,2\}$.



\textbf{First hop, second round:} each surviving User $(u,v)\in \mathcal{S}^{(1)}$ transmits a linear combination of all surviving users' randomness:
\begin{equation}
    X^{(2)}_{u,v} = \sum_{(i,j)\in \mathcal{S}^{(1)}} \Big(N_{i,j}(1) + 2^{2u+v-3} N_{i,j}(2)\Big).
\end{equation}

\textbf{Example:}  
\begin{itemize}
    \item $\mathcal{S}^{(1)} = \{(1,1),(2,2)\}$:
    \begin{align}
        X^{(2)}_{1,1} &= N_{1,1}(1)+N_{1,1}(2) + N_{2,2}(1)+N_{2,2}(2),\notag\\
        X^{(2)}_{2,2} &= N_{1,1}(1)+2^3 N_{1,1}(2) + N_{2,2}(1)+2^3 N_{2,2}(2).
    \end{align}
    \item $\mathcal{S}^{(1)} = \{(1,1),(1,2),(2,1)\}$:
    \begin{align}
        X^{(2)}_{1,1} &= N_{1,1}(1)+N_{1,1}(2) + N_{1,2}(1)+N_{1,2}(2) + N_{2,1}(1)+N_{2,1}(2),\notag\\
        X^{(2)}_{1,2} &= N_{1,1}(1)+2 N_{1,1}(2) + N_{1,2}(1)+2 N_{1,2}(2) + N_{2,1}(1)+2 N_{2,1}(2),\notag\\
        X^{(2)}_{2,1} &= N_{1,1}(1)+2^2 N_{1,1}(2) + N_{1,2}(1)+2^2 N_{1,2}(2) + N_{2,1}(1)+2^2 N_{2,1}(2).
    \end{align}
    \item $\mathcal{S}^{(1)} = \{(1,1),(1,2),(2,1),(2,2)\}$:
    \begin{equation}  
\begin{aligned}[t]
X^{(2)}_{1,1} &= N_{1,1}(1)+N_{1,1}(2) + N_{1,2}(1)+N_{1,2}(2) + N_{2,1}(1)+N_{2,1}(2) + N_{2,2}(1)+N_{2,2}(2),\\
X^{(2)}_{1,2} &= N_{1,1}(1)+2 N_{1,1}(2) + N_{1,2}(1)+2 N_{1,2}(2) + N_{2,1}(1)+2 N_{2,1}(2) + N_{2,2}(1)+2 N_{2,2}(2),\\
X^{(2)}_{2,1} &= N_{1,1}(1)+2^2 N_{1,1}(2) + N_{1,2}(1)+2^2 N_{1,2}(2) + N_{2,1}(1)+2^2 N_{2,1}(2) + N_{2,2}(1)+2^2 N_{2,2}(2),\\
X^{(2)}_{2,2} &= N_{1,1}(1)+2^3 N_{1,1}(2) + N_{1,2}(1)+2^3 N_{1,2}(2) + N_{2,1}(1)+2^3 N_{2,1}(2) + N_{2,2}(1)+2^3 N_{2,2}(2).
\end{aligned}
\end{equation}
\end{itemize}

\textbf{Second hop, second round:} each relay aggregates $V_{0}$ messages from its surviving users $\mathcal{V}^{(2)}_u$. Let $\mathcal{Q}_u\subseteq \mathcal{V}^{(2)}_u$, $|\mathcal{Q}_u|=V_{0}$, then
\begin{equation}
    Y^{(2)}_u = \{ X^{(2)}_{u,v} \}_{v\in \mathcal{Q}_u}, \quad u\in [2].  \label{ex1Y2}
\end{equation}

\textbf{Example:}  
\begin{itemize}
    \item $\mathcal{S}^{(1)} = \{(1,1),(2,2)\}$, $\mathcal{V}^{(2)}_1 = \{(1,1)\}, \mathcal{V}^{(2)}_2 = \{(2,2)\}$: 
     \begin{align}
    &Y^{(2)}_1=\{X^{(2)}_{1,1}\}=\{N_{1,1}(1)+N_{1,1}(2) + N_{2,2}(1)+N_{2,2}(2)\},\notag\\
    &Y^{(2)}_2=\{X^{(2)}_{2,2}\}=\{N_{1,1}(1)+2^3 N_{1,1}(2) + N_{2,2}(1)+2^3 N_{2,2}(2)\}.\label{Y2ex1t1}
    \end{align}
    \item $\mathcal{S}^{(1)} = \{(1,1),(1,2),(2,1)\}$, $\mathcal{V}^{(2)}_1 = \{(1,1),(1,2)\}, \mathcal{V}^{(2)}_2 = \{(2,1)\}$: 
    \begin{align}
    &Y^{(2)}_1=\{X^{(2)}_{1,1}\}=\{N_{1,1}(1)+N_{1,1}(2) + N_{1,2}(1)+N_{1,2}(2) + N_{2,1}(1)+N_{2,1}(2)\},\notag\\
    &Y^{(2)}_2=\{X^{(2)}_{2,1}\}=\{N_{1,1}(1)+2^2 N_{1,1}(2) + N_{1,2}(1)+2^2 N_{1,2}(2) + N_{2,1}(1)+2^2 N_{2,1}(2)\}.\label{Y2ex1t2}
    \end{align}
    \item $\mathcal{S}^{(1)} = \{(1,1),(1,2),(2,1),(2,2)\}$, $\mathcal{V}^{(2)}_1 = \{(1,1),(1,2)\}, \mathcal{V}^{(2)}_2 = \{(2,1),(2,2)\}$: 
         \begin{align}
    Y^{(2)}_1=&\{X^{(2)}_{1,1}\}=\{N_{1,1}(1)+N_{1,1}(2) + N_{1,2}(1)+N_{1,2}(2)\notag\\
    & + N_{2,1}(1)+N_{2,1}(2)+ N_{2,2}(1)+N_{2,2}(2)\},\notag\\
    Y^{(2)}_2=&\{X^{(2)}_{2,1}\}=\{N_{1,1}(1)+2^2 N_{1,1}(2) + N_{1,2}(1)+2^2 N_{1,2}(2) \notag\\
    &+ N_{2,1}(1)+2^2 N_{2,1}(2)+ N_{2,2}(1)+2^2 N_{2,2}(2)\}.\label{Y2ex1t3}
    \end{align}
\end{itemize}

Based on the above construction, we next analyze the achievable rates of the scheme and show that, under these rates, both correctness and security requirements are satisfied.

\emph{Rate:}
Finally, we specify the communication rates of the proposed scheme.
The input length of each user is $L=2$ symbols over $\mathbb{F}_q$.
In the first round, each user transmits $L_X^{(1)}=2$ symbols and each relay
forwards $L_Y^{(1)}=2$ symbols. Hence, the first-round rates are
$R_X^{(1)} = \frac{L_X^{(1)}}{L} = 1,
R_Y^{(1)} = \frac{L_Y^{(1)}}{L} = 1.$
In the second round, each user transmits one symbol and each relay also
forwards one symbol, i.e., $L_X^{(2)}=L_Y^{(2)}=1$.
Therefore, the second-round rates are
$R_X^{(2)} = \frac{L_X^{(2)}}{L} = \frac{1}{2},
R_Y^{(2)} = \frac{L_Y^{(2)}}{L} = \frac{1}{2}.$
Equivalently, these rates can be expressed in terms of the system parameters as
$R_X^{(2)} = \frac{1}{U_{0}V_{0}-T},
R_Y^{(2)} = \frac{1}{U_{0}- \frac{T}{V_{0}}} ,$
which evaluate to $1/2$ for the considered setting $(U_{0},V_{0},T)=(2,1,0)$.
With these rates, the proposed scheme satisfies both the correctness
and the security requirements, as shown below.

\emph{Correctness:}
Let $\mathcal{S}^{(1)} = \mathcal{V}^{(1)}_1 \cup \mathcal{V}^{(1)}_2$ denote the set of surviving users after the first hop of the first round.  
Let $\mathcal{V}^{(2)}_u \subseteq \mathcal{V}^{(1)}_u \subseteq \mathcal{S}^{(1)}$ denote the surviving users in Relay $u$ for the second round. Since $V_{0}=1$ and $U_{0}=2$, we have:
$|\mathcal{V}^{(1)}_u| \ge V_{0}, |\mathcal{V}^{(2)}_u| \ge V_{0}, |\mathcal{U}^{(1)}| = |\mathcal{U}^{(2)}| = U_{0},  \forall u\in [2]$. At least $U_{0}$ relay surviving and each relay can always select $V_{0}$ messages from the users to send.
From the second round messages $\{Y^{(2)}_u\}_{u\in [2]}$ (see (\ref{ex1Y2})), the server can recover the aggregate key
$\sum_{(i,j)\in \mathcal{S}^{(1)}} N_{i,j} =$ $ \bigl(\sum_{(i,j)\in \mathcal{S}^{(1)}} N_{i,j}(1), \sum_{(i,j)\in \mathcal{S}^{(1)}} N_{i,j}(2)\bigr)$
with no error, because the coefficients in (\ref{1keymessage1}) satisfy the MDS property.  
Combining with the sum of the first round messages
$Y^{(1)}_1 + Y^{(1)}_2 = \sum_{(i,j)\in \mathcal{S}^{(1)}} \bigl(W_{i,j} + N_{i,j}\bigr),$
the server can decode the desired sum
$\sum_{(i,j)\in \mathcal{S}^{(1)}} W_{i,j}$
with no error.  
Thus, the scheme is correct for all possible user dropouts in both rounds.

\textbf{Example cases:}  

\begin{itemize}
    \item $\mathcal{S}^{(1)} = \{(1,1),(2,1)\}$: Users $(1,2)$ and $(2,2)$ drop in the first round. The server recovers $N_{1,1}+N_{2,1}=(N_{1,1}(1)+N_{2,1}(1),N_{1,1}(2)+N_{2,1}(2))$ from $Y^{(2)}_1,Y^{(2)}_2$ (see (\ref{Y2ex1t1})) and then $W_{1,1}+W_{2,1}=Y^{(1)}_1+Y^{(1)}_2-(N_{1,1}+N_{2,1})$ (see (\ref{Y1ex1t1})).
    
    \item $\mathcal{S}^{(1)} = \{(1,1),(1,2),(2,1)\}$: User $(2,2)$ drops. If one user per relay drops in the second round, the server still recovers $N_{1,1}+N_{1,2}+N_{2,1}=(N_{1,1}(1)+N_{1,2}(1)+N_{2,1}(1),N_{1,1}(2)+N_{1,2}(2)+N_{2,1}(2))$ (see (\ref{Y2ex1t1})) and then $W_{1,1}+W_{1,2}+W_{2,1}=Y^{(1)}_1+Y^{(1)}_2-(N_{1,1}+N_{1,2}+N_{2,1})$ (see (\ref{Y1ex1t2})).
    
    \item $\mathcal{S}^{(1)} = \{(1,1),(1,2),(2,1),(2,2)\}$: No users drop in the first round. The server recovers $N_{1,1}+N_{1,2}+N_{2,1}+N_{2,2}=(N_{1,1}(1)+N_{1,2}(1)+N_{2,1}(1)+N_{2,2}(1),N_{1,1}(2)+N_{1,2}(2)+N_{2,1}(2)+N_{2,2}(2))$ (see (\ref{Y2ex1t3})) and then $W_{1,1}+W_{1,2}+W_{2,1}+W_{2,2}=Y^{(1)}_1+Y^{(1)}_2-(N_{1,1}+N_{1,2}+N_{2,1}+N_{2,2})$ (see (\ref{Y1ex1t3})).
\end{itemize}

\emph{Security:}
The security of the proposed scheme relies on the following intuition.
In the first round, each user message is masked by independent randomness,
which guarantees information-theoretic security.
In the second round, the relays transmit carefully designed linear combinations
of the randomness symbols, which provide exactly the amount of side information
needed to recover the desired sum, and no more.

To verify the security constraints, it suffices to consider one representative
admissible realization.
Specifically, we consider the case
$\mathcal{S}^{(1)}=\{(1,1),(1,2),(2,1),(2,2)\},
\mathcal{V}^{(2)}_1 = \{(1,1),(1,2)\},
\mathcal{V}^{(2)}_2 = \{(2,1),(2,2)\}.$
We show that both the relay security constraint~\eqref{relaysecurity}
and the server security constraint~\eqref{serversecurity} are satisfied.

\medskip
\noindent\textbf{Relay security:}
Consider relay~1, we have
\begin{equation}
    \begin{aligned}
    &I(X^{(1)}_{1,1},X^{(1)}_{1,2},X^{(2)}_{1,1},X^{(2)}_{1,2};
    W_{1,1}, W_{1,2}, W_{2,1}, W_{2,2})\\
    =&H(X^{(1)}_{1,1},X^{(1)}_{1,2},X^{(2)}_{1,1},X^{(2)}_{1,2})-H(X^{(1)}_{1,1},X^{(1)}_{1,2},X^{(2)}_{1,1},X^{(2)}_{1,2}
    \mid W_{1,1}, W_{1,2}, W_{2,1}, W_{2,2})\\
    \leq&6 -H(N_{1,1},N_{1,2},
    N_{2,1}(1)+N_{2,1}(2) + N_{2,2}(1)+N_{2,2}(2),\notag\\
    &N_{2,1}(1)+2 N_{2,1}(2) + N_{2,2}(1)+2 N_{2,2}(2)
    \mid W_{1,1}, W_{1,2}, W_{2,1}, W_{2,2})\\
    =&6-6=0.
\end{aligned}
\end{equation}
Hence, relay~1 obtains no information about the users' messages, and the relay
security constraint is satisfied. The proofs for the other relays follow similarly.

\medskip
\noindent\textbf{Server security:}
The target sum that the server is allowed to learn is $W_{1,1}+W_{1,2}+W_{2,1}+W_{2,2}$. To verify server security, we consider the conditional mutual information:
\begin{align}
& I\left( W_{1,1}, W_{1,2}, W_{2,1}, W_{2,2};  Y^{(1)}_1, Y^{(1)}_2,Y^{(2)}_1,Y^{(2)}_2   \big| W_{1,1}+W_{1,2}+ W_{2,1}+ W_{2,2}\right) \notag \\
=& H( W_{1,1} + W_{1,2} + N_{1,1}+ N_{1,2},  W_{2,1} +W_{2,2} + N_{2,1}+ N_{2,2},\notag\\
&N_{1,1}(1)+N_{1,1}(2)+ N_{1,2}(1)+N_{1,2}(2)+ N_{2,1}(1)+N_{2,1}(2)+ N_{2,2}(1)+N_{2,2}(2),\notag\\
&N_{1,1}(1)+2^2N_{1,1}(2)+N_{1,2}(1)+2^2N_{1,2}(2)+ N_{2,1}(1)+2^2N_{2,1}(2)+ N_{2,2}(1)+2^2N_{2,2}(2)    \big|\notag\\
&W_{1,1}+W_{1,2}+ W_{2,1}+ W_{2,2}) \notag \\
& -H( N_{1,1}+ N_{1,2}, N_{2,1}+ N_{2,2}, N_{1,1}(1)+N_{1,2}(1)+ N_{2,1}(1)+ N_{2,2}(1), \notag\\
&N_{1,1}(2)+N_{1,2}(2)+ N_{2,1}(2)+ N_{2,2}(2)  \big| W_{1,1}, W_{1,2}, W_{2,1}, W_{2,2} ) \\
=& H( W_{1,1} + W_{1,2} + N_{1,1}+ N_{1,2},  W_{2,1} +W_{2,2} + N_{2,1}+ N_{2,2}   \big|W_{1,1}+W_{1,2}+ W_{2,1}+ W_{2,2}) \notag \\
& -H( N_{1,1}+ N_{1,2}, N_{2,1}+ N_{2,2}) \\
\leq&4-4=0.
\end{align}
Therefore, the server learns no additional information about the individual
messages beyond the desired sum, and the server security constraint is satisfied.

\medskip
In this representative case, both the relay security and server security
constraints hold.
Other admissible realizations differ only in the specific indices involved,
while the entropy structure and the independence of the randomness remain
unchanged. Hence, they can be analyzed in the same manner and lead to the same
conclusion.

\subsection{Example~2: $U = 3, V = 3, U_{0} = 2, V_{0} = 2, T = 2$}

We consider a system with $UV = 9$ users. Each relay receives messages from at least $V_0 = 2$ users, and the server receives messages from at least $U_0 = 2$ relays. Furthermore, at most $T = 2$ users may collude with the server or with any relay. We choose the input length as $L = U_0 V_0 - T = 2$.
Accordingly, each user input is given by
$W_{u,v} = \bigl(W_{u,v}(1), W_{u,v}(2)\bigr)^{\top} \in \mathbb{F}_q^{2 \times 1}$.

Next, we define the correlated randomness. Each User $(u,v)\in [3]\times [3]$ has two independent and uniformly distributed random vectors over $\mathbb{F}_q$: $N_{u,v} = \bigl(N_{u,v}(1), N_{u,v}(2)\bigr)^{\top} \in \mathbb{F}_q^{2 \times 1}$,
and
$S_{u,v} = \bigl(S_{u,v}(1), S_{u,v}(2)\bigr)^{\top} \in \mathbb{F}_q^{2 \times 1}$. In total, there are $2UV=18$ such vectors across all users.

From these local random vectors, we construct $UV = 9$ generic linearly independent linear combinations, which will be used to encode additional randomness accessible to each user. Specifically, for each User $(u,v)$, the available randomness is defined as
\begin{eqnarray}
Z_{u,v} &=& \Bigg( N_{u,v}, \Big\{ (N_{i,j}(1), N_{i,j}(2), S_{i,j}(1), S_{i,j}(2)) \cdot \boldsymbol{\alpha}_{u,v} \Big\}_{(i,j)\in[3]\times[3]} \Bigg), \label{ex2keys}
\end{eqnarray}
where $\boldsymbol{\alpha}_{u,v} \in \mathbb{F}_q^{4 \times 1}$ denotes the $(3(u-1)+v)$-th column of a $4 \times 9$ MDS matrix $\boldsymbol{\alpha}$. Here, the multiplication is a standard row-vector times column-vector product, producing a scalar for each $(i,j)$. In other words, we instantiate the abstract MDS column vector $\boldsymbol{\alpha}_{u,v}$ as
$\boldsymbol{\alpha}_{u,v} = (1, 2^{3(u-1)+v}, 3^{3(u-1)+v}, 4^{3(u-1)+v})^\top$
for the inner product computation.

This construction ensures that each user has access to both private and global linear combinations of the randomness, which is essential for achieving the desired security and recoverability properties in the system.
For example, when $u=1$ and $v=3$, the available randomness for user $(1,3)$ is
\begin{align}
Z_{1,3}&=(N_{1,3}, N_{1,1}(1)+2^{2}N_{1,1}(2)+3^{2}S_{1,1}(1)+4^{2}S_{1,1}(2),\notag\\
&N_{1,2}(1)+2^{2}N_{1,2}(2)+3^{2}S_{1,2}(1)+4^{2}S_{1,2}(2),N_{1,3}(1)+2^{2}N_{1,3}(2)+3^{2}S_{1,3}(1)+4^{2}S_{1,3}(2),\notag\\
&N_{2,3}(1)+2^{2}N_{2,3}(2)+3^{2}S_{2,3}(1)+4^{2}S_{2,3}(2),N_{3,1}(1)+2^{2}N_{3,1}(2)+3^{2}S_{3,1}(1)+4^{2}S_{3,1}(2),\notag\\
&N_{3,2}(1)+2^{2}N_{3,2}(2)+3^{2}S_{3,2}(1)+4^{2}S_{3,2}(2),N_{3,3}(1)+2^{2}N_{3,3}(2)+3^{2}S_{3,3}(1)+4^{2}S_{3,3}(2)).
\end{align}

We have now completed the design of the correlated secret keys. We next describe how the messages are constructed over the two communication rounds.

\textbf{First round, first hop:}
Each user transmits
\begin{align}
X^{(1)}_{u,v} = W_{u,v} + N_{u,v},
\quad \forall (u,v)\in[3]\times[3]. \label{ex2X1}
\end{align}

\textbf{First round, second hop:}
For any $\mathcal{V}^{(1)}_u\subseteq\{(u,v)\}_{v\in[3]}$ with
$|\mathcal{V}^{(1)}_u|\ge V_{0}=2$, Relay $u$ computes $Y_u^{(1)}$, and send to the server.
\begin{align}
Y_u^{(1)}
= \sum_{(u,v)\in\mathcal{V}^{(1)}_u}
\big(W_{u,v}+N_{u,v}\big),
\quad \forall u\in[3].
\end{align}
For example, if $\mathcal{S}^{(1)}=\{(1,1),(1,2),(2,1),(2,2),(2,3),(3,1),(3,3)\}$, we set
\begin{align}
    &Y_1^{(1)} =W_{1,1} +W_{1,2} + N_{1,1}+ N_{1,2},\notag\\
    &Y_2^{(1)} =W_{2,1} +W_{2,2} +W_{2,3} + N_{2,1}+ N_{2,2}+ N_{2,3},\notag\\
    &Y_3^{(1)} =W_{3,1}   +W_{3,3} + N_{3,1} + N_{3,3}.
\end{align}

After the first round, each relay reports to the server the set of its surviving users $\mathcal{V}^{(1)}_u$.
Based on the set of surviving relays $\mathcal{U}^{(1)}$ and the reported user sets
$\{\mathcal{V}^{(1)}_u\}_{u\in\mathcal{U}^{(1)}}$,
the server determines the complete set of users that survive the first round, denoted by $\mathcal{S}^{(1)} \triangleq \cup_{u\in\mathcal{U}^{(1)}} \mathcal{V}^{(1)}_u .$
The server then broadcasts $\mathcal{S}^{(1)}$ to all surviving relays, which forward it to their associated surviving users and request the transmission of the second-round messages.

\textbf{Second round, first hop:}
For each surviving User $(u,v)\in\mathcal{S}^{(1)}$, we set
\begin{align}
X^{(2)}_{u,v}
=& \sum_{(i,j)\in\mathcal{S}^{(1)}}
\Big( (N_{i,j}(1), N_{i,j}(2), S_{i,j}(1), S_{i,j}(2)) \cdot \boldsymbol{\alpha}_{u,v} 
\Big)\notag\\
=& \sum_{(i,j)\in\mathcal{S}^{(1)}}
\Big(
N_{i,j}(1)
+2^{3u+v-4}N_{i,j}(2)
+3^{3u+v-4}S_{i,j}(1)
+4^{3u+v-4}S_{i,j}(2)
\Big).
\end{align}
For example, if $\mathcal{S}^{(1)}={(1,1),(1,2),(2,1),(2,2),(2,3),(3,1),(3,3)}$ and for $(u,v)=(2,1)$, we set
\begin{flalign}
    X^{(2)}_{2,1}=&N_{1,1}(1)+2^3N_{1,1}(2)+3^3S_{1,1}(1)+4^3S_{1,1}(2)+N_{1,2}(1)+2^3N_{1,2}(2)+3^3S_{1,2}(1)+4^3S_{1,2}(2)+&\notag\\
    &N_{2,1}(1)+2^3N_{2,1}(2)+3^3S_{2,1}(1)+4^3S_{2,1}(2)+N_{2,2}(1)+2^3N_{2,2}(2)+3^3S_{2,2}(1)+4^3S_{2,2}(2)+&\notag\\
    &N_{2,3}(1)+2^3N_{2,3}(2)+3^3S_{2,3}(1)+4^3S_{2,3}(2)+ N_{3,1}(1)+2^3N_{3,1}(2)+3^3S_{3,1}(1)+4^3S_{3,1}(2)+&\notag\\
    &N_{3,3}(1)+2^3N_{3,3}(2)+3^3S_{3,3}(1)+4^3S_{3,3}(2)
\end{flalign}

\textbf{Second round, second hop:}
Since $|\mathcal{V}^{(2)}_u|\ge V_{0}$, each Relay $u$ selects a subset
$\mathcal{Q}_u\subseteq\mathcal{V}^{(2)}_u$ with $|\mathcal{Q}_u|=V_{0}$
and forwards the corresponding messages to the server. Specifically,
\begin{align}
Y^{(2)}_u = \{ X^{(2)}_{u,v} \}_{v\in\mathcal{Q}_u},
\quad u\in[3].
\end{align}
For example, if $\mathcal{V}^{(2)}_2=\{(2,1),(2,2),(2,3)\}$, we choose $\mathcal{Q}_{2}=\{(2,1),(2,2)\}$, and set
\begin{align}
    &Y^{(2)}_1=\{X^{(2)}_{1,1},X^{(2)}_{1,2}\},\notag\\
    &Y^{(2)}_2=\{X^{(2)}_{2,1},X^{(2)}_{2,2}\},\notag\\
    &Y^{(2)}_3=\{X^{(2)}_{3,1},X^{(2)}_{3,3}\}.
\end{align}

We next analyze the achievable rates and establish correctness and security.

\emph{Rate:}
Finally, we specify the communication rates of the proposed scheme.
The first-round rates are given by
$R_X^{(1)}=\frac{L_X^{(1)}}{L}=1,
R_Y^{(1)}=\frac{L_Y^{(1)}}{L}=1.$
For the second round, we have
$R_X^{(2)}=\frac{L_X^{(2)}}{L}=\frac{1}{2},
R_Y^{(2)}=\frac{L_Y^{(2)}}{L}=1.$
Equivalently, the second-round rates can be expressed in terms of the system
parameters as
$R_X^{(2)}=\frac{1}{U_{0}V_{0}-T},
R_Y^{(2)}=\frac{1}{U_{0}- \frac{T}{V_{0}}}.$
With these rates, the proposed scheme satisfies both the correctness and the
security requirements, as shown below.

\emph{Correctness:}
Let $\mathcal{S}^{(1)}=\mathcal{V}^{(1)}_1\cup\mathcal{V}^{(1)}_2\cup\mathcal{V}^{(1)}_3$ denote the set of surviving users after the first hop of the first round, and let $\mathcal{V}^{(2)}_u\subseteq\mathcal{V}^{(1)}_u\subseteq\mathcal{S}^{(1)}$ denote the set of surviving users at Relay $u$ in the second round. Since $V_{0}=2$ and $U_{0}=2$, it holds that $|\mathcal{V}^{(1)}_u|\ge V_{0}$, $|\mathcal{V}^{(2)}_u|\ge V_{0}$, and $|\mathcal{U}^{(1)}|,|\mathcal{U}^{(2)}|\geq U_{0}$ for all $u\in[3]$, which implies that at least $U_{0}$ relays survive and each surviving relay can always select $V_{0}$ user messages for transmission.
From the second round messages $\{Y^{(2)}_u\}_{u\in \mathcal{U}^{(2)}}$, the server can recover the aggregation keys
$\sum_{(i,j)\in \mathcal{S}^{(1)}} N_{i,j} =$ $ \bigl(\sum_{(i,j)\in \mathcal{S}^{(1)}} N_{i,j}(1),$ $ \sum_{(i,j)\in \mathcal{S}^{(1)}} N_{i,j}(2)\bigr)$ and $\sum_{(i,j)\in \mathcal{S}^{(1)}} S_{i,j} =$ $ \bigl(\sum_{(i,j)\in \mathcal{S}^{(1)}} S_{i,j}(1), \sum_{(i,j)\in \mathcal{S}^{(1)}} S_{i,j}(2)\bigr)$
with no error, because the precoding matrices in (\ref{ex2keys}) are MDS.  
Combining with the sum of the first round messages
$Y^{(1)}_1 + Y^{(1)}_2 = \sum_{(i,j)\in \mathcal{S}^{(1)}} \bigl(W_{i,j} + N_{i,j}\bigr),$
the server can decode the desired sum
$\sum_{(i,j)\in \mathcal{S}^{(1)}} W_{i,j}$.

\textbf{Example Case:}
Suppose
$\mathcal{V}^{(1)}_1 \cup \mathcal{V}^{(1)}_2 \cup \mathcal{V}^{(1)}_3
= \{(1,1),(1,2),(2,1),(2,2),(2,3),(3,1),(3,3)\},$
i.e., users $\{(1,3),$ $(3,2)\}$ drop out in the first hop of the first round.
Assume that $\mathcal{U}^{(1)}=\{1,2,3\}$, i.e., no relay drops out in the first
round, and that $\mathcal{U}^{(2)}=\{1,2\}$.
From $Y^{(2)}_1$ and $Y^{(2)}_2$, the server can recover $\sum_{(i,j)\in \mathcal{S}^{(1)}} N_{i,j} =$ $ \bigl(\sum_{(i,j)\in \mathcal{S}^{(1)}} N_{i,j}(1), \sum_{(i,j)\in \mathcal{S}^{(1)}} N_{i,j}(2)\bigr)$ and $\sum_{(i,j)\in \mathcal{S}^{(1)}} S_{i,j} =$ $ \bigl(\sum_{(i,j)\in \mathcal{S}^{(1)}} S_{i,j}(1), \sum_{(i,j)\in \mathcal{S}^{(1)}} S_{i,j}(2)\bigr)$ without error, again due to the MDS property of the precoding matrices in~\eqref{ex2X1}.
Combining these with $Y^{(1)}_1+Y^{(1)}_2 = \sum_{(u,v)\in\mathcal{S}^{(1)}} W_{u,v} + \sum_{(u,v)\in\mathcal{S}^{(1)}} N_{u,v},$
the desired sum $\sum_{(u,v)\in\mathcal{S}^{(1)}} W_{u,v}$ can be decoded with zero error.
The correctness proof for other admissible dropout patterns follows in the same manner.

{\em Security:} 
To verify the security constraints, it suffices to consider one representative
admissible realization.
Specifically, we consider the case
$\mathcal{S}^{(1)}=\{(1,1),(1,2),(2,1),(2,2),(2,3),(3,1),(3,3)\},$ $
\mathcal{V}^{(2)}_1 = \{(1,1),(1,2)\},
\mathcal{V}^{(2)}_2 = \{(2,1),(2,2)\}$, $\mathcal{U}^{(2)}=\{1,2\}$, and $\mathcal{T}=\{(1,1),(2,1)\}$.
We show that both the relay security constraint~\eqref{relaysecurity}
and the server security constraint~\eqref{serversecurity} are satisfied.

\noindent\textbf{Relay security:}
Consider relay~1. We have
\begin{align}    &I(X^{(1)}_{1,1},X^{(1)}_{1,2},X^{(1)}_{1,3},X^{(2)}_{1,1},X^{(2)}_{1,2};\{W_{u,v}\}_{(u,v)\in [3]\times [3]}|W_{1,1},Z_{1,1},W_{2,1},Z_{2,1})\\
    =&H(X^{(1)}_{1,1},X^{(1)}_{1,2},X^{(1)}_{1,3},X^{(2)}_{1,1},X^{(2)}_{1,2}|W_{1,1},Z_{1,1},W_{2,1},Z_{2,1})\notag\\
    &-H(X^{(1)}_{1,1},X^{(1)}_{1,2},X^{(1)}_{1,3},X^{(2)}_{1,1},X^{(2)}_{1,2}|\{W_{u,v}\}_{(u,v)\in [3]\times [3]}, Z_{1,1}, Z_{2,1})\\
    =&H(W_{1,2}+Z_{1,2},W_{1,3}+Z_{1,3},\notag\\
    &N_{1,1}(1)+2N_{1,1}(2)+3S_{1,1}(1)+4S_{1,1}(2)+N_{1,2}(1)+2N_{1,2}(2)+3S_{1,2}(1)+4S_{1,2}(2)+\notag\\
    &N_{2,1}(1)+2N_{2,1}(2)+3S_{2,1}(1)+4S_{2,1}(2)+N_{2,2}(1)+2N_{2,2}(2)+3S_{2,2}(1)+4S_{2,2}(2)+\notag\\
    &N_{2,3}(1)+2N_{2,3}(2)+3S_{2,3}(1)+4S_{2,3}(2)+ N_{3,1}(1)+2N_{3,1}(2)+3S_{3,1}(1)+4S_{3,1}(2)+\notag\\
    &N_{3,3}(1)+2N_{3,3}(2)+3S_{3,3}(1)+4S_{3,3}(2)|W_{1,1},Z_{1,1},W_{2,1},Z_{2,1})\notag\\
    &-H(Z_{1,2},Z_{1,3},\notag\\
    &N_{1,1}(1)+2N_{1,1}(2)+3S_{1,1}(1)+4S_{1,1}(2)+N_{1,2}(1)+2N_{1,2}(2)+3S_{1,2}(1)+4S_{1,2}(2)+\notag\\
    &N_{2,1}(1)+2N_{2,1}(2)+3S_{2,1}(1)+4S_{2,1}(2)+N_{2,2}(1)+2N_{2,2}(2)+3S_{2,2}(1)+4S_{2,2}(2)+\notag\\
    &N_{2,3}(1)+2N_{2,3}(2)+3S_{2,3}(1)+4S_{2,3}(2)+ N_{3,1}(1)+2N_{3,1}(2)+3S_{3,1}(1)+4S_{3,1}(2)+\notag\\
    &N_{3,3}(1)+2N_{3,3}(2)+3S_{3,3}(1)+4S_{3,3}(2)|\{W_{u,v}\}_{(u,v)\in [3]\times [3]}, Z_{1,1},Z_{2,1})\\
    =&5-5=0.
\end{align}
The equality follows since the two entropy terms involve the same set of independent linear combinations. Conditioning on all $W_{u,v}$ removes only the data components while preserving the independent randomness, and each remaining term contributes one $q$-ary symbol.
Hence, relay~1 obtains no information about the users' messages, and the relay security constraint is satisfied.

\medskip
\noindent\textbf{Server security:}
For server security, we have
\begin{align}
& I\big( \{W_{u,v}\}_{(u,v)\in [3]\times [3]};  Y^{(1)}_1, Y^{(1)}_2,Y^{(1)}_3,Y^{(2)}_1,Y^{(2)}_2,Y^{(2)}_3   \big| \notag\\
&W_{1,1}+W_{1,2}+W_{2,1}+W_{2,2}+W_{2,3}+W_{3,1}+W_{3,3}, W_{1,1},Z_{1,1}, W_{2,1}, Z_{2,1}\big) \notag \\
=& H( W_{1,1} + W_{1,2} + N_{1,1}+ N_{1,2},  W_{2,1} + W_{2,2} + W_{2,3} + N_{2,1} + N_{2,2} + N_{2,3}, \notag\\
&W_{3,1} + W_{3,3} + N_{3,1} + N_{3,3},\notag\\
&N_{1,1}(1)+N_{1,2}(1)+N_{2,1}(1)+N_{2,2}(1)+N_{2,3}(1)+N_{3,1}(1)+N_{3,3}(1),\notag\\
&N_{1,1}(2)+N_{1,2}(2)+N_{2,1}(2)+N_{2,2}(2)+N_{2,3}(2)+N_{3,1}(2)+N_{3,3}(2),\notag\\
&S_{1,1}(1)+S_{1,2}(1)+S_{2,1}(1)+S_{2,2}(1)+S_{2,3}(1)+S_{3,1}(1)+S_{3,3}(1),\notag\\
&S_{1,1}(2)+S_{1,2}(2)+S_{2,1}(2)+S_{2,2}(2)+S_{2,3}(2)+S_{3,1}(2)+S_{3,3}(2)\notag\\
    & \big|W_{1,1}+W_{1,2}+ W_{2,1}+ W_{2,2}+ W_{2,3}+ W_{3,1}+ W_{3,3},W_{1,1},Z_{1,1}, W_{2,1}, Z_{2,1}) \notag\\
    &-H( N_{1,1}+ N_{1,2},  N_{2,1} + N_{2,2}+ N_{2,3}, N_{3,1} + N_{3,3},\notag\\
    &N_{1,1}(1)+N_{1,2}(1)+N_{2,1}(1)+N_{2,2}(1)+N_{2,3}(1)+N_{3,1}(1)+N_{3,3}(1),\notag\\
&N_{1,1}(2)+N_{1,2}(2)+N_{2,1}(2)+N_{2,2}(2)+N_{2,3}(2)+N_{3,1}(2)+N_{3,3}(2),\notag\\
&S_{1,1}(1)+S_{1,2}(1)+S_{2,1}(1)+S_{2,2}(1)+S_{2,3}(1)+S_{3,1}(1)+S_{3,3}(1),\notag\\
&S_{1,1}(2)+S_{1,2}(2)+S_{2,1}(2)+S_{2,2}(2)+S_{2,3}(2)+S_{3,1}(2)+S_{3,3}(2)\big|\notag\\
&\{W_{u,v}\}_{(u,v)\in [3]\times [3]},Z_{1,1}, Z_{2,1})  \\
=&H( W_{1,2} + N_{1,2}, W_{2,2} + W_{2,3} + N_{2,2}+N_{2,3}, W_{3,1} + W_{3,3} + N_{3,1} + N_{3,3},\notag\\
&S_{1,2}(1)+S_{2,2}(1)+S_{2,3}(1)+S_{3,1}(1)+S_{3,3}(1),S_{1,2}(2)+S_{2,2}(2)+S_{2,3}(2)+S_{3,1}(2)+S_{3,3}(2) \big|\notag\\
&W_{1,1}+W_{1,2}+ W_{2,1}+ W_{2,3}+ W_{3,1}+ W_{3,3}, W_{1,1},Z_{1,1}, W_{2,1}, Z_{2,1})\notag\\
&-  H(N_{1,2},  N_{2,2}+N_{2,3}, N_{3,1} + N_{3,3}, S_{1,2}(1)+S_{2,2}(1)+S_{2,3}(1)+S_{3,1}(1)+S_{3,3}(1),\notag\\
    &S_{1,2}(2)+S_{2,2}(2)+S_{2,3}(2)+S_{3,1}(2)+S_{3,3}(2) \big|\{W_{u,v}\}_{(u,v)\in [3]\times [3]},Z_{1,1}, Z_{2,1})\label{ex2serversect1}  \\
    =& H( W_{1,2} + N_{1,2}, W_{2,2} + W_{2,3} + N_{2,2}+N_{2,3}, W_{3,1} + W_{3,3} + N_{3,1} + N_{3,3}\big|\notag\\
&W_{1,1}+W_{1,2}+ W_{2,1}+ W_{2,3}+ W_{3,1}+ W_{3,3}, W_{1,1},Z_{1,1}, W_{2,1}, Z_{2,1})\notag\\
&+H( S_{1,2}(1)+S_{2,2}(1)+S_{2,3}(1)+S_{3,1}(1)+S_{3,3}(1),\notag\\
&S_{1,2}(2)+S_{2,2}(2)+S_{2,3}(2)+S_{3,1}(2)+S_{3,3}(2) \big|\notag\\
&W_{1,2} + N_{1,2}, W_{2,2} + W_{2,3} + N_{2,2}+N_{2,3}, W_{3,1} + W_{3,3} + N_{3,1} + N_{3,3},\notag\\
&W_{1,1}+W_{1,2}+ W_{2,1}+ W_{2,3}+ W_{3,1}+ W_{3,3}, W_{1,1},Z_{1,1}, W_{2,1}, Z_{2,1})\notag\\
&-  H(N_{1,2},  N_{2,2}+N_{2,3}, N_{3,1} + N_{3,3} \big|\{W_{u,v}\}_{(u,v)\in [3]\times [3]},Z_{1,1}, Z_{2,1}) \notag \\
    &-  H( S_{1,2}(1)+S_{2,2}(1)+S_{2,3}(1)+S_{3,1}(1)+S_{3,3}(1),\notag\\
    &S_{1,2}(2)+S_{2,2}(2)+S_{2,3}(2)+S_{3,1}(2)+S_{3,3}(2) \big|\notag\\
    &N_{1,2},  N_{2,2}+N_{2,3}, N_{3,1} + N_{3,3},\{W_{u,v}\}_{(u,v)\in [3]\times [3]},Z_{1,1}, Z_{2,1}) \label{ex2serversect2} \\
   =&3\times 2-3\times 2=0,
\end{align}
where the first term of \eqref{ex2serversect1} holds because the quantities 
$N_{1,1}(1)+N_{1,2}(1)+N_{2,1}(1)+N_{2,2}(1)+N_{2,3}(1)+N_{3,1}(1)+N_{3,3}(1)$ and 
$N_{1,1}(2)+N_{1,2}(2)+N_{2,1}(2)+N_{2,2}(2)+N_{2,3}(2)+N_{3,1}(2)+N_{3,3}(2)$ 
are fully determined by the sums 
$W_{1,1}+W_{1,2}+N_{1,1}+N_{1,2}$, 
$W_{2,1}+W_{2,2}+W_{2,3}+N_{2,1}+N_{2,2}+N_{2,3}$, 
$W_{3,1}+W_{3,3}+N_{3,1}+N_{3,3}$, 
and the target sum 
$W_{1,1}+W_{1,2}+W_{2,1}+W_{2,3}+W_{3,1}+W_{3,3}$.
The second term of \eqref{ex2serversect2} is zero because the sums 
$S_{1,2}(1)+S_{2,2}(1)+S_{2,3}(1)+S_{3,1}(1)+S_{3,3}(1)$ and 
$S_{1,2}(2)+S_{2,2}(2)+S_{2,3}(2)+S_{3,1}(2)+S_{3,3}(2)$ 
are fully determined by 
$W_{1,2}+N_{1,2}$, 
$W_{2,2}+W_{2,3}+N_{2,2}+N_{2,3}$, 
$W_{3,1}+W_{3,3}+N_{3,1}+N_{3,3}$, 
the target sum $W_{1,1}+W_{1,2}+W_{2,1}+W_{2,3}+W_{3,1}+W_{3,3}$, 
and the colluding users' information $W_{1,1}, Z_{1,1}, W_{2,1}, Z_{2,1}$.

We now extend the above example to a general scheme applicable to arbitrary system parameters, while maintaining the same correctness and security properties.

\subsection{General Scheme for Arbitrary $U, V, U_{0}, V_{0}, T$}

We consider a system with $UV$ users indexed by $(u,v)\in[U]\times[V]$.
Each relay receives messages from at least $V_0$ users, and the server receives messages from at least $U_0$ relays. Moreover, at most $T$ users may collude with the server or with any relay.
Accordingly, we choose the input length as
$L = U_0 V_0 - T.$
Each User $(u,v)$ holds an input vector
$W_{u,v} = \bigl(W_{u,v}(1), \cdots, W_{u,v}(U_0 V_0 - T)\bigr)^\top
\in \mathbb{F}_q^{(U_0 V_0 - T)\times 1}.$

We first specify the correlated secret keys used in the scheme.
Consider a total of $2UV$ independent and uniformly distributed random vectors over $\mathbb{F}_q$, given by
$N_{u,v} = \bigl(N_{u,v}(1), \cdots, N_{u,v}(U_0 V_0 - T)\bigr)^\top
\in \mathbb{F}_q^{(U_0 V_0 - T)\times 1},$
and
$S_{u,v} = \bigl(S_{u,v}(1), \cdots, S_{u,v}(T)\bigr)^\top
\in \mathbb{F}_q^{T\times 1},$
for all $(u,v)\in[U]\times[V]$.
From these secret keys, we construct a collection of linearly independent linear combinations over $\mathbb{F}_q$, parameterized by a carefully designed MDS matrix. Specifically, for each $(u,v)\in[U]\times[V]$, the randomness available at User $(u,v)$ is defined as
\begin{equation}
Z_{u,v}
=
\left(
N_{u,v},
\bigl\{[\boldsymbol{Q}_{i,j}]_{u,v}\bigr\}_{(i,j)\in[U]\times[V]}
\right).
\label{eq:zz}
\end{equation}
Here,
\[
[\boldsymbol{Q}_{i,j}]_{u,v}
\triangleq
\Bigl(
N_{i,j}(1), \cdots, N_{i,j}(U_0 V_0 - T),
S_{i,j}(1), \cdots, S_{i,j}(T)
\Bigr)
\boldsymbol{\alpha}_{u,v}
\in \mathbb{F}_q,
\]
where $\boldsymbol{\alpha}_{u,v}\in\mathbb{F}_q^{U_0 V_0\times 1}$ denotes the $(V(u-1)+v)$th column of a matrix
$\boldsymbol{\alpha}\in\mathbb{F}_q^{U_0 V_0\times UV}$.

We say that a matrix $\boldsymbol{\alpha}\in\mathbb{F}_q^{U_0 V_0\times UV}$ with $U_0 V_0 < UV$ is an MDS matrix if any
$U_0 V_0\times U_0 V_0$ submatrix is nonsingular.
Furthermore, $\boldsymbol{\alpha}$ is said to be a $T$-privacy MDS matrix if the submatrix formed by its last $T$ rows is also MDS.
Such a matrix exists for sufficiently large field size $q$.

A $T$-private MDS matrix~\cite{so2022lightsecagg} guarantees that, for any subset
$\mathcal{T}\subseteq[U]\times[V]$ with $|\mathcal{T}|\le T$, the collection of linear projections
$\{[\boldsymbol{Q}_{i,j}]_{u,v}\}_{(i,j)\in[U]\times[V],\,(u,v)\in\mathcal{T}}$
is statistically independent of the masking variables
$\{N_{i,j}\}_{(i,j)\in[U]\times[V]}$, i.e.,
\begin{equation}
I\left(
\{[\boldsymbol{Q}_{i,j}]_{u,v}\}_{(i,j)\in[U]\times[V],\,(u,v)\in\mathcal{T}};
\{N_{i,j}\}_{(i,j)\in[U]\times[V]}
\right)=0.\label{Tprivacy}
\end{equation}
This follows from the fact that the submatrix formed by the last $T$ rows of
$\boldsymbol{\alpha}$ is MDS, which ensures that any set of at most $T$ such projections
depends only on the randomness vectors $\{S_{i,j}\}$ and reveals no information about
$\{N_{i,j}\}$.

To facilitate the subsequent security analysis, we next summarize several useful entropy properties of the secret keys in the following lemma.

\begin{lemma}
For any subset $\mathcal{T}\subseteq[U]\times[V]$ with $|\mathcal{T}|\le T$,
any $\mathcal{V}^{(2)}_u\subseteq \mathcal{V}^{(1)}_u \subseteq \{(u,v)\}_{v\in[V]}$
satisfying $|\mathcal{V}^{(2)}_u|\ge V_0$ for all $u\in[U]$, and any
$\mathcal{U}^{(2)}\subseteq \mathcal{U}^{(1)} \subseteq [U]$
satisfying $|\mathcal{U}^{(2)}|\ge U_0$, with $U_0V_0>T$, we have
\begin{equation}
I\left(
\{N_{i,j}\}_{(i,j)\in([U]\times[V])\setminus\mathcal{T}};
\{N_{i,j}\}_{(i,j)\in\mathcal{T}},
\{[\boldsymbol{Q}_{i,j}]_{u,v}\}_{(i,j)\in[U]\times[V],\,(u,v)\in\mathcal{T}}
\right)=0.
\label{Tprivacy1}
\end{equation}
\end{lemma}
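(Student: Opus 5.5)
The plan is to show that the conditional distribution of
$A \triangleq \bigl(\{N_{i,j}\}_{(i,j)\in\mathcal{T}},\{[\boldsymbol{Q}_{i,j}]_{u,v}\}_{(i,j)\in[U]\times[V],(u,v)\in\mathcal{T}}\bigr)$
given $N_{\overline{\mathcal{T}}}\triangleq\{N_{i,j}\}_{(i,j)\in([U]\times[V])\setminus\mathcal{T}}$ does not depend on the value of $N_{\overline{\mathcal{T}}}$. This gives $I(N_{\overline{\mathcal{T}}};A)=0$ directly. The dropout sets $\mathcal{V}^{(1)}_u,\mathcal{V}^{(2)}_u,\mathcal{U}^{(1)},\mathcal{U}^{(2)}$ play no role, since neither the keys nor the claim depend on them. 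The condition $U_0V_0>T$ is used only to ensure the scheme, and hence $\boldsymbol{\alpha}$, is well defined. If $T=0$ then $A$ is empty and there is nothing to prove, so assume $|\mathcal{T}|\ge 1$.

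\textbf{Step 1: split $\boldsymbol{\alpha}$ and group the projections by source.} Write $\boldsymbol{\alpha}_{u,v}=\bigl(\boldsymbol{\beta}_{u,v}^\top,\boldsymbol{\gamma}_{u,v}^\top\bigr)^\top$, where $\boldsymbol{\beta}_{u,v}\in\mathbb{F}_q^{U_0V_0-T}$ and $\boldsymbol{\gamma}_{u,v}\in\mathbb{F}_q^{T}$ is the part from the last $T$ rows. Let $\boldsymbol{\Gamma}_{\mathcal{T}}\in\mathbb{F}_q^{T\times|\mathcal{T}|}$ collect the columns $\boldsymbol{\gamma}_{u,v}$ for $(u,v)\in\mathcal{T}$, and define $\boldsymbol{B}_{\mathcal{T}}$ analogously from the $\boldsymbol{\beta}_{u,v}$. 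For each source index $(i,j)$, the $|\mathcal{T}|$ projections seen by the colluders form the row vector
$$N_{i,j}^\top \boldsymbol{B}_{\mathcal{T}} + S_{i,j}^\top\boldsymbol{\Gamma}_{\mathcal{T}}.$$

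\textbf{Step 2: the colluders' own sources.} For $(i,j)\in\mathcal{T}$, these projections are functions of $(N_{i,j},S_{i,j})$. Both of these variables are independent of $N_{\overline{\mathcal{T}}}$ and of all $S_{i',j'}$ with $(i',j')\notin\mathcal{T}$.

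\textbf{Step 3: the remaining sources are one-time padded.} For $(i,j)\notin\mathcal{T}$, the projection vector is $N_{i,j}^\top\boldsymbol{B}_{\mathcal{T}}$ plus the pad $S_{i,j}^\top\boldsymbol{\Gamma}_{\mathcal{T}}$. Since $S_{i,j}$ is uniform on $\mathbb{F}_q^T$ and independent of everything else, the pad is uniform on $\mathbb{F}_q^{|\mathcal{T}|}$ provided $\boldsymbol{\Gamma}_{\mathcal{T}}$ has full column rank $|\mathcal{T}|$. The pads for distinct $(i,j)$ are mutually independent, and independent of all $N$'s and of $\{S_{i,j}\}_{(i,j)\in\mathcal{T}}$. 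Hence, conditioned on any value of $N_{\overline{\mathcal{T}}}$, the whole tuple $A$ is distributed as a fixed law of $(N_{\mathcal{T}},S_{\mathcal{T}})$ together with i.i.d.\ uniform vectors in $\mathbb{F}_q^{|\mathcal{T}|}$, one for each $(i,j)\notin\mathcal{T}$. This law does not depend on $N_{\overline{\mathcal{T}}}$. Equivalently, one can compute $H(A)-H(A\mid N_{\overline{\mathcal{T}}})$ and find both terms equal to $H(N_{\mathcal{T}},\text{own projections})+(UV-|\mathcal{T}|)|\mathcal{T}|$.

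\textbf{Main obstacle: the rank claim.} The step that needs care is the full column rank of $\boldsymbol{\Gamma}_{\mathcal{T}}$, which is exactly where the $T$-privacy property of $\boldsymbol{\alpha}$ enters. The last-$T$-row submatrix is MDS, meaning every $T\times T$ submatrix is nonsingular. Since $|\mathcal{T}|\le T$ and $UV>U_0V_0>T$, we can extend $\mathcal{T}$ to a set of exactly $T$ column indices. The resulting $T\times T$ submatrix is invertible, so its columns are linearly independent, and therefore any subset of them, in particular the columns indexed by $\mathcal{T}$, is linearly independent as well. This is essentially property \eqref{Tprivacy}, strengthened to also condition on the colluders' own $N_{\mathcal{T}}$, which Step 2 handles.
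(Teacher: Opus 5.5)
Your proof is correct, and it takes a more first-principles route than the paper. Write $N_{\mathcal{T}}$ and $N_{\overline{\mathcal{T}}}$ as you do, and $Q_{\mathcal{T}}$ for $\{[\boldsymbol{Q}_{i,j}]_{u,v}\}_{(i,j)\in[U]\times[V],(u,v)\in\mathcal{T}}$. The paper never opens up the key structure. It expands $I(N_{\overline{\mathcal{T}}};N_{\mathcal{T}},Q_{\mathcal{T}})=H(N_{\mathcal{T}},Q_{\mathcal{T}})-H(N_{\mathcal{T}}\mid N_{\overline{\mathcal{T}}})-H(Q_{\mathcal{T}}\mid N_{[U]\times[V]})$ by the chain rule. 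It then uses independence of the masks and rewrites the last term as $H(Q_{\mathcal{T}})-I(Q_{\mathcal{T}};N_{[U]\times[V]})$. That mutual information is set to zero by the previously stated property \eqref{Tprivacy}, and the proof finishes with $H(N_{\mathcal{T}},Q_{\mathcal{T}})\le H(N_{\mathcal{T}})+H(Q_{\mathcal{T}})$.

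So in the paper the lemma is a formal consequence of \eqref{Tprivacy} plus mutual independence of the $N_{i,j}$. All of the coding content lives in \eqref{Tprivacy}, which the paper only asserts with a brief justification. Your conditional-law argument proves the independence directly, through three steps:
the split $\boldsymbol{\alpha}_{u,v}=(\boldsymbol{\beta}_{u,v}^\top,\boldsymbol{\gamma}_{u,v}^\top)^\top$;
the observation that each non-colluding source appears as $N_{i,j}^\top\boldsymbol{B}_{\mathcal{T}}$ padded by $S_{i,j}^\top\boldsymbol{\Gamma}_{\mathcal{T}}$;
and full column rank of $\boldsymbol{\Gamma}_{\mathcal{T}}$, obtained by extending $\mathcal{T}$ to $T$ columns of the last-$T$-row MDS block.

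This makes your proof self-contained and shows exactly where the $T$-privacy assumption is used. Running the same pad argument over all sources also establishes \eqref{Tprivacy} itself. The price is that it is longer than the paper's short reduction.

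Your remarks that the dropout sets play no role and that empty collusion is trivial are also right. State that case as $\mathcal{T}=\emptyset$ rather than $T=0$, since $T\ge 1$ with $\mathcal{T}=\emptyset$ is equally trivial. Your argument covers it vacuously anyway.
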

\begin{IEEEproof}
    Since $|\mathcal{T}|\le T$, the $T$-privacy property established in
\eqref{Tprivacy} applies.
\begin{align}
    &I\left(
\{N_{i,j}\}_{(i,j)\in([U]\times[V])\setminus\mathcal{T}};
\{N_{i,j}\}_{(i,j)\in\mathcal{T}},
\{[\boldsymbol{Q}_{i,j}]_{u,v}\}_{(i,j)\in[U]\times[V],\,(u,v)\in\mathcal{T}}
\right)\\
    =&H\left(\{N_{i,j}\}_{(i,j)\in \mathcal{T}},\{[\boldsymbol{Q}_{i,j}]_{u,v}\}_{(i,j)\in[U]\times [V],(u,v)\in\mathcal{T}}\right)\notag\\
    &-H\left(\{N_{i,j}\}_{(i,j)\in \mathcal{T}},\{[\boldsymbol{Q}_{i,j}]_{u,v}\}_{(i,j)\in[U]\times [V],(u,v)\in\mathcal{T}}|\{N_{i,j}\}_{(i,j)\in ([U]\times [V])\setminus \mathcal{T}}\right)\\
    =&H\left(\{N_{i,j}\}_{(i,j)\in \mathcal{T}},\{[\boldsymbol{Q}_{i,j}]_{u,v}\}_{(i,j)\in[U]\times [V],(u,v)\in\mathcal{T}}\right)-H\left(\{N_{i,j}\}_{(i,j)\in \mathcal{T}}|\{N_{i,j}\}_{(i,j)\in ([U]\times [V])\setminus \mathcal{T}}\right)\notag\\
    &-H\left(\{[\boldsymbol{Q}_{i,j}]_{u,v}\}_{(i,j)\in[U]\times [V],(u,v)\in\mathcal{T}}|\{N_{i,j}\}_{(i,j)\in [U]\times [V]}\right)\\
    =&H(\{N_{i,j}\}_{(i,j)\in \mathcal{T}},\{[\boldsymbol{Q}_{i,j}]_{u,v}\bigr\}_{(i,j)\in[U]\times [V],(u,v)\in\mathcal{T}})-H(\{N_{i,j}\}_{(i,j)\in \mathcal{T}})\notag\\
    &-H(\{[\boldsymbol{Q}_{i,j}]_{u,v}\bigr\}_{(i,j)\in[U]\times [V],(u,v)\in\mathcal{T}})+\underbrace{I(\{[\boldsymbol{Q}_{i,j}]_{u,v}\bigr\}_{(i,j)\in[U]\times [V],(u,v)\in\mathcal{T}};\{N_{i,j}\}_{(i,j)\in [U]\times [V]})}_{\overset{(\ref{Tprivacy})}{=}0}\label{lemma1t1}\\
    \leq &H(\{N_{i,j}\}_{(i,j)\in \mathcal{T}})+H(\{[\boldsymbol{Q}_{i,j}]_{u,v}\bigr\}_{(i,j)\in[U]\times [V],(u,v)\in\mathcal{T}})\notag\\
    &-H(\{N_{i,j}\}_{(i,j)\in \mathcal{T}})-H(\{[\boldsymbol{Q}_{i,j}]_{u,v}\bigr\}_{(i,j)\in[U]\times [V],(u,v)\in\mathcal{T}})\\
    =&0,
\end{align}
where the fourth term in~\eqref{lemma1t1} equals zero by
\eqref{Tprivacy}, which follows from the $T$-privacy property
of the MDS matrix $\boldsymbol{\alpha}$.
\end{IEEEproof}

With the correlated secret keys fully specified, 
we proceed to describe the message transmissions over two rounds.

\textbf{First round, first hop:}
Each user transmits
\begin{align}
X^{(1)}_{u,v} = W_{u,v} + N_{u,v}, \quad \forall (u,v)\in[U]\times[V].\label{r1h1}
\end{align}

\textbf{First round, second hop:}
For any $\mathcal{V}^{(1)}_u\subseteq\{(u,v)\}_{v\in[V]}$ with
$|\mathcal{V}^{(1)}_u|\ge V_{0}=2$, 
Relay $u$ computes $Y_u^{(1)}$ and forwards it to the server.
\begin{align}
Y_u^{(1)} = \sum_{(u,v)\in\mathcal{V}^{(1)}_u} \big(W_{u,v}+N_{u,v}\big), \quad \forall u\in[U].\label{r1h2}
\end{align}

After the first round, each relay reports to the server the set of its surviving users $\mathcal{V}^{(1)}_u$.
Based on the set of surviving relays $\mathcal{U}^{(1)}$ and the reported user sets
$\{\mathcal{V}^{(1)}_u\}_{u\in\mathcal{U}^{(1)}}$,
the server determines the complete set of users that survive the first round, denoted by $\mathcal{S}^{(1)} \triangleq \cup_{u\in\mathcal{U}^{(1)}} \mathcal{V}^{(1)}_u .$
The server then broadcasts $\mathcal{S}^{(1)}$ to all surviving relays, which forward it to their associated surviving users and request the transmission of the second-round messages.

\textbf{Second round, first hop:}
To enable the recovery of the aggregated masking variables at the server,
each surviving User $(u,v)\in\mathcal{S}^{(1)}$ transmits
\begin{align}
X^{(2)}_{u,v} = \sum_{(i,j)\in\mathcal{S}^{(1)}} [\boldsymbol{Q}_{i,j}]_{u,v}. \label{r2h1}
\end{align}

\textbf{Second round, second hop:}
Since each Relay $u$ has at least $V_0$ surviving users in the second round,
i.e., $|\mathcal{V}^{(2)}_u|\ge V_0$, each Relay $u$ arbitrarily selects a subset
$\mathcal{Q}_u\subseteq\mathcal{V}^{(2)}_u$ with $|\mathcal{Q}_u|=V_{0}$,
and forwards
\begin{align}
Y^{(2)}_u=\{X^{(2)}_{u,v}\}_{v\in\mathcal{Q}_u}, \quad u\in[U]. \label{r2h2}
\end{align}

With the correlated randomness and two-round message transmissions fully specified, 
we proceed to analyze the scheme's achievable rate, correctness, and information-theoretic security.

\emph{Rate:} 
We now specify the communication rates of the proposed scheme. 
Each user input has length $L = U_0 V_0 - T$ symbols over $\mathbb{F}_q$. 

\textbf{First round rates:} 
Each user transmits 
$L_X^{(1)} = U_0 V_0 - T$ symbols, and each relay forwards 
$L_Y^{(1)} = U_0 V_0 - T$ symbols. 
Hence, the first-round rates are
\[
R_X^{(1)} = \frac{L_X^{(1)}}{L} = 1, \quad 
R_Y^{(1)} = \frac{L_Y^{(1)}}{L} = 1.
\]

\textbf{Second round rates:} 
Each user transmits a single symbol, $L_X^{(2)} = 1$, and each relay forwards 
$L_Y^{(2)} = V_0$ symbols. 
Thus, the second-round rates are
\[
R_X^{(2)} = \frac{L_X^{(2)}}{L} = \frac{1}{U_0 V_0 - T}, \quad
R_Y^{(2)} = \frac{L_Y^{(2)}}{L} = \frac{V_0}{U_0 V_0 - T} = \frac{1}{U_0 - \frac{T}{V_0}}.
\]

With these rates, the proposed scheme meets both the correctness and security requirements, as we detail in the following analysis.

\emph{Correctness:}  
The server receives the second-round messages $\{Y^{(2)}_u\}_{u\in \mathcal{U}^{(2)}}$, which by (\ref{r2h2}) satisfy $\{Y^{(2)}_u\}_{u\in \mathcal{U}^{(2)}} = \{\{X^{(2)}_{u,v}\}_{v\in\mathcal{Q}_u}\}_{u\in \mathcal{U}^{(2)}}$, and by (\ref{r2h1}) further equal $\{\{\sum_{(i,j)\in\mathcal{S}^{(1)}} [\boldsymbol{Q}_{i,j}]_{u,v}\}_{v\in\mathcal{Q}_u}\}_{u\in \mathcal{U}^{(2)}}$. Each encoded symbol $[\boldsymbol{Q}_{i,j}]_{u,v}$ is defined as $[\boldsymbol{Q}_{i,j}]_{u,v} \triangleq (N_{i,j}(1), \ldots, N_{i,j}(U_0 V_0 - T), S_{i,j}(1), \ldots, S_{i,j}(T)) \cdot \boldsymbol{\alpha}_{u,v}$, where $\boldsymbol{\alpha} = [\boldsymbol{\alpha}_{u,v}]_{(u,v)\in [U]\times [V]} \in \mathbb{F}_q^{U_0 V_0 \times U_0 V_0}$ is an MDS matrix. By the MDS property, any $U_0 V_0$ columns of $\boldsymbol{\alpha}$ are linearly independent, ensuring that the server can recover all random symbols and thus the desired sum $\sum_{(u,v)\in[U_0]\times[V_0]} W_{u,v}$.

By design, each Relay $u\in \mathcal{U}^{(2)}$ forwards messages from at least $|\mathcal{Q}_u|\ge V_0$ users, and the server receives messages from at least $|\mathcal{U}^{(2)}|\ge U_0$ relays. 
Therefore, the server obtains at least $U_0 V_0$ coded symbols of the form $\sum_{(i,j)\in\mathcal{S}^{(1)}} [\boldsymbol{Q}_{i,j}]_{u,v}$.
Since these correspond to $U_0 V_0$ linearly independent columns of the MDS matrix $\boldsymbol{\alpha}$, the server can uniquely recover the aggregated randomness vector
$\sum_{(i,j)\in \mathcal{S}^{(1)}}
(N_{i,j}(1), \ldots, N_{i,j}(U_0 V_0 - T), S_{i,j}(1), \ldots, S_{i,j}(T))$
without error. Equivalently, the server recovers
$\sum_{(i,j)\in \mathcal{S}^{(1)}} N_{i,j}$ and $\sum_{(i,j)\in \mathcal{S}^{(1)}} S_{i,j}$
exactly.

Finally, combining the recovered aggregated keys with the sum of the first-round messages,
$\sum_{u\in \mathcal{U}^{(1)}} Y^{(1)}_u
\overset{(\ref{r1h2})}{=}
\sum_{(u,v)\in \mathcal{S}^{(1)}} (W_{u,v} + N_{u,v}),$
the server can subtract the aggregated randomness $\sum_{(u,v)\in \mathcal{S}^{(1)}} N_{u,v}$ decoded in the second round, and hence uniquely recover the desired aggregation
$\sum_{(u,v)\in \mathcal{S}^{(1)}} W_{u,v}$ with zero decoding error.

{\em Security:} 
Having established the correctness of the aggregation scheme, we now analyze its security.
The system ensures that no relay or server can learn any user's input beyond what is allowed, even if up to 
$T$ users collude with them.
The security analysis consists of two parts. We first consider relay security.

\medskip
\noindent\textbf{Relay security:} 
Each relay only observes messages from its associated users. We show that even if a relay
colludes with any set of at most $T$ users, it cannot obtain any information about the inputs
of non-colluding users.
Recall that $\mathcal{M}_u$ denotes the set of users associated with relay~$u$, and
$\mathcal{T}$ denotes the set of colluding users with $|\mathcal{T}|\le T$.
We focus on the case where $|\mathcal{V}^{(1)}_u|\geq U_{0}V_{0}$, for which
\begin{align}
    &I\Big(\left\{X^{(1)}_{u,v}\right\}_{(u,v)\in \mathcal{M}_u},\left\{X^{(2)}_{u,v}\right\}_{(u,v)\in \mathcal{V}^{(1)}_{u}}; W_{[U]\times [V]}\Big|\{W_{u,v},Z_{u,v}\}_{(u,v)\in \mathcal{T}}\Big)\\
    =&H\Big(\left\{X^{(1)}_{u,v}\right\}_{(u,v)\in \mathcal{M}_u},\left\{X^{(2)}_{u,v}\right\}_{(u,v)\in \mathcal{V}^{(1)}_{u}}\Big|\{W_{u,v},Z_{u,v}\}_{(u,v)\in \mathcal{T}}\Big)\notag\\
    &-H\Big(\left\{X^{(1)}_{u,v}\right\}_{(u,v)\in \mathcal{M}_u},\left\{X^{(2)}_{u,v}\right\}_{(u,v)\in \mathcal{V}^{(1)}_{u}}\Big| W_{[U]\times [V]},\{W_{u,v},Z_{u,v}\}_{(u,v)\in \mathcal{T}}\Big)\\
    =&H\Bigg(\{W_{u,v} + N_{u,v}\}_{(u,v)\in \mathcal{M}_u},\Bigg\{\sum_{(i,j)\in\mathcal{S}^{(1)}} [\boldsymbol{Q}_{i,j}]_{u,v}\Bigg\}_{(u,v)\in \mathcal{V}^{(1)}_{u}}\Bigg|\notag\\
    &\Big\{W_{u,v},N_{u,v},\bigl\{ [\boldsymbol{Q}_{i,j}]_{u,v} \bigr\}_{(i,j)\in[U]\times[V]}\Big\}_{(u,v)\in \mathcal{T}}\Bigg)\notag\\
    &-H\Bigg(\{N_{u,v}\}_{(u,v)\in \mathcal{M}_u},\Bigg\{\sum_{(i,j)\in\mathcal{S}^{(1)}} [\boldsymbol{Q}_{i,j}]_{u,v}\Bigg\}_{(u,v)\in \mathcal{V}^{(1)}_{u}}\Bigg|\notag\\
    & W_{[U]\times [V]},\Big\{N_{u,v},\bigl\{ [\boldsymbol{Q}_{i,j}]_{u,v} \bigr\}_{(i,j)\in[U]\times[V]}\Big\}_{(u,v)\in \mathcal{T}}\Bigg)\\
    =&H\Big(\{W_{u,v} + N_{u,v}\}_{(u,v)\in \mathcal{M}_u\setminus \mathcal{T}}\Big|\Big\{W_{u,v},N_{u,v},\bigl\{ [\boldsymbol{Q}_{i,j}]_{u,v} \bigr\}_{(i,j)\in[U]\times[V]}\Big\}_{(u,v)\in \mathcal{T}}\Big)\notag\\
    &+H\Bigg(\sum_{(i,j)\in \mathcal{S}^{(1)}} N_{i,j}, \sum_{(i,j)\in \mathcal{S}^{(1)}} S_{i,j}\Bigg|\{W_{u,v} + N_{u,v}\}_{(u,v)\in \mathcal{M}_u\setminus \mathcal{T}},\notag\\
    &\Big\{W_{u,v},N_{u,v},\bigl\{ [\boldsymbol{Q}_{i,j}]_{u,v} \bigr\}_{(i,j)\in[U]\times[V]}\Big\}_{(u,v)\in \mathcal{T}}\Bigg) \notag\\
    &-H\Big(\{N_{u,v}\}_{(u,v)\in \mathcal{M}_u\setminus \mathcal{T}}\Big| W_{[U]\times [V]},\Big\{N_{u,v},\bigl\{ [\boldsymbol{Q}_{i,j}]_{u,v} \bigr\}_{(i,j)\in[U]\times[V]}\Big\}_{(u,v)\in \mathcal{T}}\Big)-\notag\\
    &H\Bigg(\sum_{(i,j)\in \mathcal{S}^{(1)}} N_{i,j}, \sum_{(i,j)\in \mathcal{S}^{(1)}} S_{i,j}\Bigg|\{ N_{u,v}\}_{(u,v)\in \mathcal{M}_u\cup\mathcal{T}}, W_{[U]\times [V]},\Big\{\bigl\{ [\boldsymbol{Q}_{i,j}]_{u,v} \bigr\}_{(i,j)\in[U]\times[V]}\Big\}_{(u,v)\in \mathcal{T}}\Bigg)\label{relaysecpft1}\\
    =&H\Big(\{W_{u,v} + N_{u,v}\}_{(u,v)\in \mathcal{M}_u\setminus \mathcal{T}}\Big|\Big\{W_{u,v},N_{u,v},\bigl\{ [\boldsymbol{Q}_{i,j}]_{u,v} \bigr\}_{(i,j)\in[U]\times[V]}\Big\}_{(u,v)\in \mathcal{T}}\Big)\notag\\
    &+H\Bigg(\sum_{(i,j)\in \mathcal{S}^{(1)}} N_{i,j}\Bigg|\{W_{u,v} + N_{u,v}\}_{(u,v)\in \mathcal{M}_u\setminus \mathcal{T}},\Big\{W_{u,v},N_{u,v},\bigl\{ [\boldsymbol{Q}_{i,j}]_{u,v} \bigr\}_{(i,j)\in[U]\times[V]}\Big\}_{(u,v)\in \mathcal{T}}\Bigg) \notag\\
    &+H\Bigg(\sum_{(i,j)\in \mathcal{S}^{(1)}} S_{i,j}\Bigg|\sum_{(i,j)\in \mathcal{S}^{(1)}} N_{i,j}, \{W_{u,v} + N_{u,v}\}_{(u,v)\in \mathcal{M}_u\setminus \mathcal{T}},\notag\\
    &\Big\{W_{u,v},N_{u,v},\bigl\{ [\boldsymbol{Q}_{i,j}]_{u,v} \bigr\}_{(i,j)\in[U]\times[V]}\Big\}_{(u,v)\in \mathcal{T}}\Bigg) \notag\\
    &-H\Big(\{N_{u,v}\}_{(u,v)\in \mathcal{M}_u\setminus \mathcal{T}}\Big| W_{[U]\times [V]},\Big\{N_{u,v},\bigl\{ [\boldsymbol{Q}_{i,j}]_{u,v} \bigr\}_{(i,j)\in[U]\times[V]}\Big\}_{(u,v)\in \mathcal{T}}\Big)\notag\\
    &-H\Bigg(\sum_{(i,j)\in \mathcal{S}^{(1)}} N_{i,j}\Bigg|\{ N_{u,v}\}_{(u,v)\in \mathcal{M}_u\cup\mathcal{T}}, W_{[U]\times [V]},\Big\{\bigl\{ [\boldsymbol{Q}_{i,j}]_{u,v} \bigr\}_{(i,j)\in[U]\times[V]}\Big\}_{(u,v)\in \mathcal{T}}\Bigg)-\notag\\
    &H\Bigg(\sum_{(i,j)\in \mathcal{S}^{(1)}} S_{i,j}\Bigg| \sum_{(i,j)\in \mathcal{S}^{(1)}} N_{i,j},\{ N_{u,v}\}_{(u,v)\in \mathcal{M}_u\cup\mathcal{T}}, W_{[U]\times [V]},\Big\{\bigl\{ [\boldsymbol{Q}_{i,j}]_{u,v} \bigr\}_{(i,j)\in[U]\times[V]}\Big\}_{(u,v)\in \mathcal{T}}\Bigg)\\
    \overset{(\ref{independent})}{\leq}&H\Big(\{W_{u,v} + N_{u,v}\}_{(u,v)\in \mathcal{M}_u\setminus \mathcal{T}}\Big)+H\Bigg(\sum_{(i,j)\in \mathcal{S}^{(1)}} N_{i,j}\Bigg)\notag\\
    & +\underbrace{H\Bigg(\sum_{(i,j)\in \mathcal{S}^{(1)}} S_{i,j}\Bigg|\sum_{(i,j)\in \mathcal{S}^{(1)}} N_{i,j}, \bigl\{ [\boldsymbol{Q}_{i,j}]_{u,v} \bigr\}_{(i,j)\in[U]\times[V],(u,v)\in \mathcal{T}}\Bigg)}_{\overset{(\ref{eq:zz})}{=}0} -H\Big(\{N_{u,v}\}_{(u,v)\in \mathcal{M}_u\setminus \mathcal{T}}\Big)\notag\\
    &-\underbrace{I\Bigg(\{N_{u,v}\}_{(u,v)\in \mathcal{M}_u\setminus \mathcal{T}}; \Big\{N_{u,v},\bigl\{ [\boldsymbol{Q}_{i,j}]_{u,v} \bigr\}_{(i,j)\in[U]\times[V]}\Big\}_{(u,v)\in \mathcal{T}}\Bigg)}_{\overset{(\ref{Tprivacy1})}{=}0}\notag\\
    &-H\Bigg(\sum_{(i,j)\in \mathcal{S}^{(1)}} N_{i,j}\Bigg|\{ N_{u,v}\}_{(u,v)\in \mathcal{M}_u\cup\mathcal{T}},\Big\{\bigl\{ [\boldsymbol{Q}_{i,j}]_{u,v} \bigr\}_{(i,j)\in[U]\times[V]}\Big\}_{(u,v)\in \mathcal{T}}\Bigg)\label{relaysecpft2}\\
    \leq & |\mathcal{M}_u\setminus \mathcal{T}|L+L-|\mathcal{M}_u\setminus \mathcal{T}|L-L=0,\label{relaysecpft3}
\end{align}
where (\ref{relaysecpft1}) holds because $|\mathcal{V}^{(1)}_u|\geq U_0V_0$.
Hence, from the collection
$\bigl\{\sum_{(i,j)\in\mathcal{S}^{(1)}} [\boldsymbol{Q}_{i,j}]_{u,v}\bigr\}_{(u,v)\in\mathcal{V}^{(1)}_u}$,
relay~$u$ can decode the aggregated randomness
$\sum_{(i,j)\in\mathcal{S}^{(1)}} N_{i,j}$ and
$\sum_{(i,j)\in\mathcal{S}^{(1)}} S_{i,j}$ due to the MDS property of the encoding matrix.
In (\ref{relaysecpft2}), the third entropy term
$H\Big(\sum_{(i,j)\in \mathcal{S}^{(1)}} S_{i,j}\,\Big|\,
\sum_{(i,j)\in \mathcal{S}^{(1)}} N_{i,j},
\bigl\{[\boldsymbol{Q}_{i,j}]_{u,v}\bigr\}_{(i,j)\in[U]\times[V],(u,v)\in\mathcal{T}}
\Big)$
equals zero, since $\sum_{(i,j)\in \mathcal{S}^{(1)}} S_{i,j}$ is uniquely determined by
$\sum_{(i,j)\in \mathcal{S}^{(1)}} N_{i,j}$ together with the colluding users' coded symbols,
as ensured by the key construction in~(\ref{eq:zz}).
Moreover, the mutual information term
\be 
I\Big(\{N_{u,v}\}_{(u,v)\in \mathcal{M}_u\setminus \mathcal{T}};
\{N_{u,v},[\boldsymbol{Q}_{i,j}]_{u,v}\}_{(u,v)\in \mathcal{T}}\Big) 
\ee
is zero due to the $T$-privacy property in~(\ref{Tprivacy1}).
Finally, the last conditional entropy term equals $L$ since
$\sum_{(i,j)\in \mathcal{S}^{(1)}} N_{i,j}$ is independent of
$\{ N_{u,v}\}_{(u,v)\in \mathcal{M}_u\cup\mathcal{T}}$ and
$\bigl\{[\boldsymbol{Q}_{i,j}]_{u,v}\bigr\}_{(i,j)\in[U]\times[V],(u,v)\in \mathcal{T}}$.

When $|\mathcal{V}^{(1)}_u|< U_{0}V_{0}$, we have
\begin{align}
    &I\Big(\left\{X^{(1)}_{u,v}\right\}_{(u,v)\in \mathcal{M}_u},\left\{X^{(2)}_{u,v}\right\}_{(u,v)\in \mathcal{V}^{(1)}_{u}}; W_{[U]\times [V]}\Big|\{W_{u,v},Z_{u,v}\}_{(u,v)\in \mathcal{T}}\Big)\\
    \leq&I\Big(\left\{X^{(1)}_{u,v}\right\}_{(u,v)\in \mathcal{M}_u},\{X^{(2)}_{u,v}\}_{(u,v)\in [U_0]\times [V_0]}; W_{[U]\times [V]}\Big|\{W_{u,v},Z_{u,v}\}_{(u,v)\in \mathcal{T}}\Big)\overset{(\ref{relaysecpft3})}{=}0.
\end{align}
Hence, relay~$u$ obtains no information about the users' messages, and the relay
security constraint is satisfied.

We next analyze the server security of the proposed scheme.

\noindent\textbf{Server security:}
The aggregation server collects messages from multiple relays.
The proposed scheme guarantees that, even if the server colludes with any set of at most $T$ users,
it can only learn the aggregation of the surviving users in the first round,
and obtains no additional information about the messages of the non-colluding users.
\begin{align}
&I\Bigg( \left\{Y^{(1)}_u\right\}_{u\in [U]}, \left\{Y^{(2)}_u\right\}_{u\in \mathcal{U}^{(1)}}; W_{[U]\times [V]} \Bigg|\sum_{(u,v)\in \mathcal{S}^{(1)}} W_{u,v}, \{W_{u,v}, Z_{u,v}\}_{(u,v)\in \mathcal{T}} \Bigg)\\
=&H\Bigg( \left\{Y^{(1)}_u\right\}_{u\in [U]}, \left\{Y^{(2)}_u\right\}_{u\in \mathcal{U}^{(1)}} \Bigg|\sum_{(u,v)\in \mathcal{S}^{(1)}} W_{u,v}, \{W_{u,v}, Z_{u,v}\}_{(u,v)\in \mathcal{T}} \Bigg)\notag\\
&-H\Bigg( \left\{Y^{(1)}_u\right\}_{u\in [U]}, \left\{Y^{(2)}_u\right\}_{u\in \mathcal{U}^{(1)}}\Bigg| W_{[U]\times [V]} ,\sum_{(u,v)\in \mathcal{S}^{(1)}} W_{u,v}, \{W_{u,v}, Z_{u,v}\}_{(u,v)\in \mathcal{T}} \Bigg)\\
=&H\Bigg( \Bigg\{\sum_{(u,v)\in\mathcal{V}^{(1)}_u} \big(W_{u,v}+N_{u,v}\big)\Bigg\}_{u\in [U]}, \Bigg\{\sum_{(i,j)\in\mathcal{S}^{(1)}} [\boldsymbol{Q}_{i,j}]_{u,v}\Bigg\}_{(u,v)\in \mathcal{S}^{(1)}} \Bigg|\sum_{(u,v)\in \mathcal{S}^{(1)}} W_{u,v},\notag\\
&\Big\{W_{u,v},N_{u,v},\bigl\{ [\boldsymbol{Q}_{i,j}]_{u,v} \bigr\}_{(i,j)\in[U]\times[V]}\Big\}_{(u,v)\in \mathcal{T}}\Bigg)-H\Bigg( \Bigg\{\sum_{(u,v)\in\mathcal{V}^{(1)}_u} \big(N_{u,v}\big)\Bigg\}_{u\in [U]}, \notag\\
&\Bigg\{\sum_{(i,j)\in\mathcal{S}^{(1)}} [\boldsymbol{Q}_{i,j}]_{u,v}\Bigg\}_{(u,v)\in \mathcal{S}^{(1)}} \Bigg|W_{[U]\times [V]},\Big\{N_{u,v},\bigl\{ [\boldsymbol{Q}_{i,j}]_{u,v} \bigr\}_{(i,j)\in[U]\times[V]}\Big\}_{(u,v)\in \mathcal{T}}\Big)\\
=&H\Bigg( \Bigg\{\sum_{(u,v)\in\mathcal{V}^{(1)}_u} \big(W_{u,v}+N_{u,v}\big)\Bigg\}_{u\in [U]\setminus \mathcal{U}_T} \Bigg|\sum_{(u,v)\in \mathcal{S}^{(1)}} W_{u,v},\notag\\
&\Big\{W_{u,v},N_{u,v},\bigl\{ [\boldsymbol{Q}_{i,j}]_{u,v} \bigr\}_{(i,j)\in[U]\times[V]}\Big\}_{(u,v)\in \mathcal{T}}\Bigg)+H\Bigg(  \sum_{(i,j)\in \mathcal{S}^{(1)}} N_{i,j}, \sum_{(i,j)\in \mathcal{S}^{(1)}} S_{i,j} \Bigg|\notag\\
&\Bigg\{\sum_{(u,v)\in\mathcal{V}^{(1)}_u} \big(W_{u,v}+N_{u,v}\big)\Bigg\}_{u\in [U]},\sum_{(u,v)\in \mathcal{S}^{(1)}} W_{u,v},\Big\{W_{u,v},N_{u,v},\bigl\{ [\boldsymbol{Q}_{i,j}]_{u,v} \bigr\}_{(i,j)\in[U]\times[V]}\Big\}_{(u,v)\in \mathcal{T}}\Bigg)\notag\\
&-H\Bigg( \Bigg\{\sum_{(u,v)\in\mathcal{V}^{(1)}_u} \big(N_{u,v}\big)\Bigg\}_{u\in [U]\setminus \mathcal{U}_T} \Bigg|\Big\{N_{u,v},\bigl\{ [\boldsymbol{Q}_{i,j}]_{u,v} \bigr\}_{(i,j)\in[U]\times[V]}\Big\}_{(u,v)\in \mathcal{T}},W_{[U]\times [V]}\Bigg)\notag\\
&-H\Bigg(  \sum_{(i,j)\in \mathcal{S}^{(1)}} N_{i,j}, \sum_{(i,j)\in \mathcal{S}^{(1)}} S_{i,j} \Bigg|\Bigg\{\sum_{(u,v)\in\mathcal{V}^{(1)}_u} \big(W_{u,v}+N_{u,v}\big)\Bigg\}_{u\in [U]},W_{[U]\times [V]},\notag\\
&\Big\{N_{u,v},\bigl\{ [\boldsymbol{Q}_{i,j}]_{u,v} \bigr\}_{(i,j)\in[U]\times[V]}\Big\}_{(u,v)\in \mathcal{T}}\Bigg)\label{serversecpft1}\\
=&H\Bigg( \Bigg\{\sum_{(u,v)\in\mathcal{V}^{(1)}_u} \big(W_{u,v}+N_{u,v}\big)\Bigg\}_{u\in [U]\setminus \mathcal{U}_T} \Bigg|\Big\{W_{u,v},N_{u,v},\bigl\{ [\boldsymbol{Q}_{i,j}]_{u,v} \bigr\}_{(i,j)\in[U]\times[V]}\Big\}_{(u,v)\in \mathcal{T}}\Bigg)\notag\\
&+H\Bigg(  \sum_{(i,j)\in \mathcal{S}^{(1)}} N_{i,j} \Bigg|\Bigg\{\sum_{(u,v)\in\mathcal{V}^{(1)}_u} \big(W_{u,v}+N_{u,v}\big)\Bigg\}_{u\in [U]},\sum_{(u,v)\in \mathcal{S}^{(1)}} W_{u,v},\notag\\
&\Big\{W_{u,v},N_{u,v},\bigl\{ [\boldsymbol{Q}_{i,j}]_{u,v} \bigr\}_{(i,j)\in[U]\times[V]}\Big\}_{(u,v)\in \mathcal{T}}\Bigg)+H\Bigg(  \sum_{(i,j)\in \mathcal{S}^{(1)}} S_{i,j} \Bigg| \sum_{(i,j)\in \mathcal{S}^{(1)}} N_{i,j}, \notag\\
&\Bigg\{\sum_{(u,v)\in\mathcal{V}^{(1)}_u} \big(W_{u,v}+N_{u,v}\big)\Bigg\}_{u\in [U]},\sum_{(u,v)\in \mathcal{S}^{(1)}} W_{u,v},\Big\{W_{u,v},N_{u,v},\bigl\{ [\boldsymbol{Q}_{i,j}]_{u,v} \bigr\}_{(i,j)\in[U]\times[V]}\Big\}_{(u,v)\in \mathcal{T}}\Bigg)\notag\\
&-H\Bigg( \Bigg\{\sum_{(u,v)\in\mathcal{V}^{(1)}_u} \big(N_{u,v}\big)\Bigg\}_{u\in [U]\setminus \mathcal{U}_T}\Bigg)-H\Bigg(  \sum_{(i,j)\in \mathcal{S}^{(1)}} N_{i,j} \Bigg|\Bigg\{\sum_{(u,v)\in\mathcal{V}^{(1)}_u} N_{u,v}\Bigg\}_{u\in [U]},\notag\\
&W_{[U]\times [V]},\Big\{N_{u,v},\bigl\{ [\boldsymbol{Q}_{i,j}]_{u,v} \bigr\}_{(i,j)\in[U]\times[V]}\Big\}_{(u,v)\in \mathcal{T}}\Bigg)-H\Bigg(  \sum_{(i,j)\in \mathcal{S}^{(1)}} S_{i,j} \Bigg| \sum_{(i,j)\in \mathcal{S}^{(1)}} N_{i,j},\notag\\
&\Bigg\{\sum_{(u,v)\in\mathcal{V}^{(1)}_u} N_{u,v}\Bigg\}_{u\in [U]}, W_{[U]\times [V]},\Big\{N_{u,v},\bigl\{ [\boldsymbol{Q}_{i,j}]_{u,v} \bigr\}_{(i,j)\in[U]\times[V]}\Big\}_{(u,v)\in \mathcal{T}}\Bigg)\label{serversecpft2}\\
\leq&H\Bigg( \Bigg\{\sum_{(u,v)\in\mathcal{V}^{(1)}_u} \big(W_{u,v}+N_{u,v}\big)\Bigg\}_{u\in [U]\setminus \mathcal{U}_T} \Bigg)-H\Bigg( \Bigg\{\sum_{(u,v)\in\mathcal{V}^{(1)}_u} N_{u,v}\Bigg\}_{u\in [U]\setminus \mathcal{U}_T} \Bigg)\label{serversecpft3}\\
=&|[U]\setminus \mathcal{U}_T|- |[U]\setminus \mathcal{U}_T|=0.
\end{align}
We next justify the intermediate steps (\ref{serversecpft1})--(\ref{serversecpft2}).

First, let $\mathcal{U}_T$ denote the set of relays that are connected exclusively to the colluding users in $\mathcal{T}$.  
Conditioned on the values $\{W_{u,v}, Z_{u,v}\}_{(u,v)\in\mathcal{T}}$, the messages transmitted by these relays in $\mathcal{U}_T$ are completely determined, and thus they introduce no additional uncertainty.
In (\ref{serversecpft1}), the second and the fourth entropy terms follow from the MDS structure of the second-round encoding.
Specifically, the collection $\{Y^{(2)}_u\}_{u\in\mathcal{U}^{(1)}}$ allows the server to recover only the aggregated
randomness $\sum_{(i,j)\in\mathcal{S}^{(1)}} N_{i,j}$ and $\sum_{(i,j)\in\mathcal{S}^{(1)}} S_{i,j}$.
Given the aggregate $\sum_{(u,v)\in\mathcal{S}^{(1)}} W_{u,v}$, these quantities are independent of the individual users'
messages and thus can be separated as shown.
In (\ref{serversecpft2}), the second term is zero since
$\sum_{(i,j)\in \mathcal{S}^{(1)}} N_{i,j}$ is uniquely determined by
$\{\sum_{(u,v)\in\mathcal{V}^{(1)}_u} (W_{u,v}+N_{u,v})\}_{u\in[U]}$
together with $\sum_{(u,v)\in\mathcal{S}^{(1)}} W_{u,v}$.
The fourth term follows from the $T$-privacy property in~(\ref{Tprivacy1}), which guarantees that
$\{\sum_{(u,v)\in\mathcal{V}^{(1)}_u} N_{u,v}\}_{u\in[U]\setminus\mathcal{U}_T}$
is independent of the information available to the colluding users.
The third term equals the sixth term.
When $|\mathcal{T}| = T$, both terms are zero since
$\sum_{(i,j)\in \mathcal{S}^{(1)}} S_{i,j}$ is uniquely determined by
$\sum_{(i,j)\in \mathcal{S}^{(1)}} N_{i,j}$ and the second-round encoding coefficients
known to the colluding users.
When $|\mathcal{T}| < T$, the random variable
$\sum_{(i,j)\in \mathcal{S}^{(1)}} S_{i,j}$ is independent of
$\sum_{(i,j)\in \mathcal{S}^{(1)}} N_{i,j}$,
$\{\sum_{(u,v)\in\mathcal{V}^{(1)}_u} (W_{u,v}+N_{u,v})\}_{u\in [U]}$,
$\sum_{(u,v)\in \mathcal{S}^{(1)}} W_{u,v}$,
and $\{W_{u,v}, N_{u,v}, \{[\boldsymbol{Q}_{i,j}]_{u,v}\}_{(i,j)\in[U]\times[V]}\}_{(u,v)\in \mathcal{T}}$.
Hence, the third term equals
$H\!\left(\sum_{(i,j)\in \mathcal{S}^{(1)}} S_{i,j}\right)$.
Similarly, in the sixth term,
$\sum_{(i,j)\in \mathcal{S}^{(1)}} S_{i,j}$ is independent of
$\sum_{(i,j)\in \mathcal{S}^{(1)}} N_{i,j}$,
$\{\sum_{(u,v)\in\mathcal{V}^{(1)}_u} N_{u,v}\}_{u\in [U]}$,
$W_{[U]\times [V]}$,
and 
$$\{N_{u,v}, \{[\boldsymbol{Q}_{i,j}]_{u,v}\}_{(i,j)\in[U]\times[V]}\}_{(u,v)\in \mathcal{T}}$$.
Therefore, the sixth term also equals
$H\!\left(\sum_{(i,j)\in \mathcal{S}^{(1)}} S_{i,j}\right)$.

\section{Converse Proof of Theorem \ref{thm:main}}\label{sec:con}
Before presenting the converse proof, we first establish a basic property that follows from the independence between the inputs
$\{W_{u,v}\}_{(u,v)\in[U]\times [V]}$
and the secret keys
$\{Z_{u,v}\}_{(u,v)\in[U]\times [V]}$,
together with the uniform distribution of
$\{W_{u,v}\}_{(u,v)\in[U]\times [V]}$.
This property is formalized in the following lemma, which will be repeatedly invoked in the subsequent analysis.

\begin{lemma}\label{lemma:indc}
For any $V_1\leq V_2\leq V_3\leq V$, and $U_1<U_2<U_3<U$, the following equality holds. 
\begin{eqnarray}
I\Bigg( \sum_{(u,v)\in[U_2]\times [V_2]} W_{u,v}; \sum_{(u,v)\in[U_3]\times [V_3]} W_{u,v}, \left\{ W_{u,v}, Z_{u,v} \right\}_{(u,v)\in[U_1]\times [V_1]} \Bigg) = 0. \label{eq:indc}
\end{eqnarray}
\end{lemma}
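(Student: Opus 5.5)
The lemma follows from the independence structure \eqref{independent} together with the uniformity of the inputs. The idea is that the sum over the middle rectangle $[U_2]\times[V_2]$ contains an input outside the colluding rectangle $[U_1]\times[V_1]$, and that input acts as a one-time pad. For brevity write $A\triangleq\sum_{(u,v)\in[U_2]\times[V_2]}W_{u,v}$, $B\triangleq\sum_{(u,v)\in[U_3]\times[V_3]}W_{u,v}$ and $\mathcal{C}\triangleq\{W_{u,v},Z_{u,v}\}_{(u,v)\in[U_1]\times[V_1]}$.

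The first step is to decompose the mutual information by the chain rule as $I(A;B,\mathcal{C})=I(A;\mathcal{C})+I(A;B\mid\mathcal{C})$, and to show that both terms vanish.

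For the first term, note that $U_1<U_2$ guarantees that the set $\mathcal{D}\triangleq([U_2]\times[V_2])\setminus([U_1]\times[V_1])$ is nonempty; for instance, $(U_2,1)\in\mathcal{D}$. By \eqref{independent}, the inputs $\{W_{u,v}\}_{(u,v)\in\mathcal{D}}$ are independent of $\mathcal{C}$. Each of them is also uniform over $\mathbb{F}_q^L$ by \eqref{inputsize}. Hence, conditioned on $\mathcal{C}$, the sum $A$ equals a term fixed by $\mathcal{C}$ plus $\sum_{\mathcal{D}}W_{u,v}$, which is uniform. Therefore $H(A\mid\mathcal{C})=L=H(A)$, and $I(A;\mathcal{C})=0$.

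For the second term, the plan is to expand it as $I(A;B\mid\mathcal{C})=H(B\mid\mathcal{C})-H(B\mid A,\mathcal{C})$. The upper bound $H(B\mid\mathcal{C})\le L$ is immediate. For the matching lower bound, observe that $B-A=\sum_{([U_3]\times[V_3])\setminus([U_2]\times[V_2])}W_{u,v}$. Because $U_2<U_3$, this sum contains at least one input, such as $W_{U_3,1}$. That input lies in neither $[U_2]\times[V_2]$ nor $[U_1]\times[V_1]$, so it is independent of $(A,\mathcal{C})$ and uniform. It follows that $B$ is uniform given $(A,\mathcal{C})$, i.e., $H(B\mid A,\mathcal{C})=L$, and consequently $I(A;B\mid\mathcal{C})\le L-L=0$. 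Combining the two steps yields \eqref{eq:indc}.

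The only delicate point is to make sure the index sets are genuinely nonempty. This is where the strict inequalities $U_1<U_2<U_3$ are used; the $V$'s need only be ordered weakly, since the extra elements can be taken from the row $U_2$ or $U_3$ with $v=1\le V_1$. Independence of the keys from the non-colluding inputs also relies on the full joint independence in \eqref{independent}, not merely pairwise independence. The argument is otherwise routine.
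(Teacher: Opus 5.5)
Your proof is correct and is essentially the paper's argument. Both rest on the joint independence in \eqref{independent} to neutralize the keys, and on a fresh uniform input outside the smaller rectangle acting as a one-time pad. The paper strips $\{Z_{u,v}\}$ first and evaluates $H(A\mid B,\{W_{u,v}\}_{(u,v)\in[U_1]\times[V_1]})$ through the joint-entropy count $(U_1V_1+2)L-(U_1V_1+1)L$. You instead split via the chain rule into $I(A;\mathcal{C})+I(A;B\mid\mathcal{C})$ and kill each term separately, which makes explicit where the strict inequalities $U_1<U_2<U_3$ and the nesting of the rectangles are used.
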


{\it Proof:}
\begin{align}
& I\Bigg( \sum_{(u,v)\in[U_2]\times [V_2]} W_{u,v}; \sum_{(u,v)\in[U_3]\times [V_3]} W_{u,v}, \left\{ W_{u,v}, Z_{u,v} \right\}_{(u,v)\in[U_1]\times [V_1]} \Bigg) \notag\\
\overset{(\ref{independent})}{=}& I\Bigg( \sum_{(u,v)\in[U_2]\times [V_2]} W_{u,v}; \sum_{(u,v)\in[U_3]\times [V_3]} W_{u,v}, \left\{ W_{u,v} \right\}_{(u,v)\in[U_1]\times [V_1]} \Bigg) \\
=&  H\Bigg( \sum_{(u,v)\in[U_2]\times [V_2]} W_{u,v} \Bigg)-I\Bigg( \sum_{(u,v)\in[U_2]\times [V_2]} W_{u,v}\Bigg| \sum_{(u,v)\in[U_3]\times [V_3]} W_{u,v}, \left\{ W_{u,v} \right\}_{(u,v)\in[U_1]\times [V_1]} \Bigg) \\
=& L -H\Bigg( \sum_{(u,v)\in[U_2]\times [V_2]} W_{u,v}\Bigg| \sum_{(u,v)\in[U_3]\times [V_3]} W_{u,v}, \left\{ W_{u,v} \right\}_{(u,v)\in[U_1]\times [V_1]} \Bigg)\label{eq:c1} \\
=& L - \big( (U_1V_1 +2) L - (U_1V_1 + 1)L \big)= 0,
\end{align}
where in (\ref{eq:c1}) and the last step, we use the uniformity of $\left\{W_{u,v}\right\}_{(u,v)\in[U]\times [V]}$.

Building on Lemma~\ref{lemma:indc}, we are ready to establish the infeasible regime when $U_{0}V_{0} \leq T$ in Subsection~\ref{inf}, and the converse bounds when $U_{0}V_{0} > T$ in Subsections~\ref{RX1RY1}, \ref{RX2RY2} and~\ref{RX2RY22}.

\subsection{Infeasibility Proof When $U_{0}V_{0} \leq T$} \label{inf}

We show that when $U_{0}V_{0} \leq T$, the server may collude with all surviving users in the second round.
Due to the correctness requirement, the server can recover the sum of the inputs in the first round.
On the other hand, the server security constraint requires that the server learns nothing beyond this sum.
In particular, the server should not obtain any information about any subset of the first-round inputs.
When $U_{0}V_{0} \leq T$, the server is able to infer information about subsets of the first-round inputs, which contradicts the server security constraint.
To see why a contradiction arises, consider the following choice of sets.
Let $\mathcal{V}^{(1)}_u = \mathcal{V}^{(2)}_u = {(u,v)}_{v\in [V_{0}]}$ for each $u$, and let $\mathcal{U}^{(1)} = [U_{0}+2]$, $\mathcal{U}^{(2)} = [U_{0}]$.
Moreover, define the colluding set of users as $\mathcal{T} = [U_{0}]\times [V_{0}]$.
Note that $|\mathcal{T}| = U_{0}V_{0} \leq T$, and hence this choice of $\mathcal{V}^{(1)}$, $\mathcal{V}^{(2)}$, $\mathcal{U}^{(1)}$, and $\mathcal{T}$ is feasible under the threat model.
From the server security constraint in (\ref{serversecurity}), we have
\begin{align}
0 \overset{(\ref{serversecurity})}{=}& I\Bigg(\left\{W_{u,v}\right\}_{(u,v)\in[U]\times [V]}; \left\{Y^{(1)}_u\right\}_{u\in[U]}, \left\{Y_u^{(2)}\right\}_{u\in[U_{0}+2]} \Bigg|\notag\\
&\sum_{(u,v)\in[U_{0}+2]\times [V_{0}]} W_{u,v}, \left\{ W_{u,v}, Z_{u,v} \right\}_{(u,v)\in[U_{0}]\times [V_{0}]} \Bigg) \\
\geq&  I\Bigg( \sum_{(u,v)\in[U_{0}+1]\times [V_{0}]} W_{u,v};\left\{Y^{(1)}_u\right\}_{u\in[U_{0}+1]} \Bigg|\sum_{(u,v)\in[U_{0}+2]\times [V_{0}]} W_{u,v}, \left\{ W_{u,v}, Z_{u,v} \right\}_{(u,v)\in[U_{0}]\times [V_{0}]} \Bigg) \\
=&  I\Bigg( \sum_{(u,v)\in[U_{0}+1]\times [V_{0}]} W_{u,v};\left\{Y^{(1)}_u\right\}_{u\in[U_{0}+1]},\left\{Y_u^{[U_{0}+1]}\right\}_{u\in[U_{0}]}, \left\{X_{u,v}^{(2)}\right\}_{(u,v)\in [U_{0}]\times [V_{0}]}  \Bigg|\notag \\
&\sum_{(u,v)\in[U_{0}+2]\times [V_{0}]} W_{u,v}, \left\{ W_{u,v}, Z_{u,v} \right\}_{(u,v)\in[U_{0}]\times [V_{0}]} \Bigg) \label{infeasiblet1}\\
\geq&  I\Bigg( \sum_{(u,v)\in[U_{0}+1]\times [V_{0}]} W_{u,v};\left\{Y^{(1)}_u\right\}_{u\in[U_{0}+1]},\left\{Y_u^{[U_{0}+1]}\right\}_{u\in[U_{0}]}  ,\notag \\
&\sum_{(u,v)\in[U_{0}+2]\times [V_{0}]} W_{u,v}, \left\{ W_{u,v}, Z_{u,v} \right\}_{(u,v)\in[U_{0}]\times [V_{0}]} \Bigg) \notag\\
&-\underbrace{I\Bigg( \sum_{(u,v)\in[U_{0}+1]\times [V_{0}]} W_{u,v}; \sum_{(u,v)\in[U_{0}+2]\times [V_{0}]} W_{u,v}, \left\{ W_{u,v}, Z_{u,v} \right\}_{(u,v)\in[U_{0}]\times [V_{0}]} \Bigg)}_{\overset{(\ref{eq:indc})}{=}0} \label{inf:tt1}\\
\geq&  I\Bigg( \sum_{(u,v)\in[U_{0}+1]\times [V_{0}]} W_{u,v};\left\{Y^{(1)}_u\right\}_{u\in[U_{0}+1]},\left\{Y_u^{[U_{0}+1]}\right\}_{u\in[U_{0}]}\Bigg)   \\
=&  H\Bigg( \sum_{(u,v)\in[U_{0}+1]\times [V_{0}]} W_{u,v}\Bigg) - \underbrace{H\Bigg( \sum_{(u,v)\in[U_{0}+1]\times [V_{0}]} W_{u,v}\Bigg|\left\{Y^{(1)}_u\right\}_{u\in[U_{0}+1]},\left\{Y_u^{[U_{0}+1]}\right\}_{u\in[U_{0}]}\Bigg)}_{\overset{(\ref{correctness})}{=}0}  \label{infeasiblet2}\\
=& L,\\
\Rightarrow& 0 \geq L,
\end{align}
where $\{Y_u^{[U_{0}+1]}\}_{u\in[U_{0}]}$ denotes the second-hop messages transmitted in the second round. These messages are designed such that the server can recover the input sum
$\sum_{(u,v)\in [U_{0}+1]\times [V_{0}]} W_{u,v}$.
Moreover, by combining the first-round messages
$\{Y_u^{(1)}\}_{u\in[U_{0}+1]} \subset \{Y_u^{(1)}\}_{u\in[U_{0}+2]} \subset \{Y_u^{(1)}\}_{u\in[U]}$
with the second-round messages
$\{Y_u^{[U_{0}+1]}\}_{u\in[U_{0}]}$,
the server is able to reconstruct this sum.
In (\ref{infeasiblet1}), we use the fact that
$\{X_{u,v}^{(2)}\}_{(u,v)\in [U_{0}]\times [V_{0}]}$
is a function of
$\{W_{u,v}, Z_{u,v}\}_{(u,v)\in [U_{0}]\times [V_{0}]}$
as shown in (\ref{r2h1}), and that
$\{Y_u^{[U_{0}+1]}\}_{u\in[U_{0}]}$
is a function of
$\{X_{u,v}^{(2)}\}_{(u,v)\in [U_{0}]\times [V_{0}]}$
as shown in (\ref{r2h2}).
Here, we deliberately use the notation
$\{Y_u^{[U_{0}+1]}\}_{u\in[U_{0}]}$
instead of
$\{Y_u^{(2)}\}_{u\in[U_{0}]}$
to emphasize that these messages are specifically constructed to enable the server to recover
$\sum_{(u,v)\in [U_{0}+1]\times [V_{0}]} W_{u,v}$.
The second term in (\ref{inf:tt1}) equals zero due to Lemma~\ref{lemma:indc}, by setting
$V_1=V_2=V_3=V_{0}$,
$U_1=U_{0}$,
$U_2=U_{0}+1$,
and
$U_3=U_{0}+2$
in (\ref{eq:indc}).
In (\ref{infeasiblet2}), the first term equals $L$ since the inputs are independent and uniformly distributed, implying that their sum is also uniform.
The second term is zero due to the correctness constraint in (\ref{correctness}), when
$\mathcal{V}^{(1)}_u=\mathcal{V}^{(2)}_u=\{(u,v)\}_{v\in [V_{0}]}$,
$\mathcal{U}^{(1)}=[U_{0}+1]$,
and
$\mathcal{U}^{(2)}=[U_{0}]$.
In the final step, the inequality $0 \geq L$ yields a contradiction. Therefore, the constraints used in the above derivation cannot be satisfied simultaneously, and the problem is infeasible when $U_{0}V_{0} \leq T$.

\subsection{Converse for $R^{(1)}_X \geq 1$ and $R^{(1)}_Y \geq 1$ When $U_{0}V_{0} > T$} \label{RX1RY1}

Intuitively, the converse follows from the fact that the aggregation server must be able to recover the sum of all users that survive the first round, even if a subset of these users drop out in the second round. 
From the correctness requirement, the first-round messages must already contain sufficient information to reconstruct the inputs of those users who may potentially drop out later. 
In particular, since any user surviving the first round may be absent in the second round, the first-round communication must individually convey the input of each surviving user. 
As a result, the first-round message rate must be at least $L$ in order to account for the input of any user that drops out in the second round.

We establish the converse bound for the first hop message rate in the first round. 
Fix any $(u',v') \in [U]\times [V]$, and define
$\{\mathcal{V}^{(1)}_u\}_{u\in \mathcal{U}^{(1)}} = [U]\times [V], 
\{\mathcal{V}^{(2)}_u\}_{u\in \mathcal{U}^{(2)}} = ([U]\times [V])\setminus \{(u',v')\},$
with $\mathcal{U}^{(1)} = \mathcal{U}^{(2)} = [U]$.
From the correctness constraint in \eqref{correctness}, we have
\begin{align}
0
&\overset{(\ref{correctness})}{=} H\Bigg(\sum_{(u,v)\in[U]\times [V]} W_{u,v} \Bigg| \left\{Y^{(1)}_u\right\}_{u\in[U]}, \left\{Y^{(2)}_u\right\}_{u\in[U]} \Bigg) \\
&\geq H\Bigg(\sum_{(u,v)\in[U]\times [V]} W_{u,v} \Bigg| \left\{Y^{(1)}_u\right\}_{u\in[U]},  \left\{X^{(1)}_{u,v}\right\}_{(u,v)\in[U]\times [V]}, \left \{Y^{(2)}_u\right\}_{u\in[U]}, \notag\\
&\hspace{2.5cm}
\left\{X^{(2)}_{u,v}\right\}_{(u,v)\in([U]\times [V])\setminus\{(u',v')\}},
\left\{W_{u,v}, Z_{u,v}\right\}_{(u,v)\in([U]\times [V])\setminus\{(u',v')\}}\Bigg) \\
&\overset{(\ref{messageX2}),(\ref{messageY1}),(\ref{messageY2})}{=} 
H\Big( W_{u',v'} \,\big|\, X^{(1)}_{u',v'}, \left\{W_{u,v}, Z_{u,v}\right\}_{(u,v)\in([U]\times [V])\setminus\{(u',v')\}} \Big),\label{covpftt1}
\end{align}
where~(\ref{covpftt1}) follows since
$\{X^{(1)}_{u,v}\}_{(u,v)\in([U]\times [V])\setminus \{(u',v')\}}$ and 
$\{X^{(2)}_{u,v}\}_{(u,v)\in([U]\times [V])\setminus \{(u',v')\}}$ 
are deterministic functions of
$\{W_{u,v}, Z_{u,v}\}_{(u,v)\in([U]\times [V])\setminus \{(u',v')\}}$,
as defined in~(\ref{messageX1}) and~(\ref{messageX2}), and
$\{Y^{(1)}_u\}_{u\in[U]}$ and 
$\{Y^{(2)}_u\}_{u\in[U]}$ 
are deterministic functions of
$\{X^{(1)}_{u,v}\}_{(u,v)\in[U]\times [V]}$ and
$\{X^{(2)}_{u,v}\}_{(u,v)\in([U]\times [V])\setminus \{(u',v')\}}$, respectively,
as defined in~(\ref{messageY1}) and~(\ref{messageY2}).
Next,
\begin{eqnarray}
L &\overset{(\ref{inputsize})}{=}& H( W_{u',v'}) \overset{(\ref{independent})}{=} H\left(  W_{u',v'} \big|  \left\{ W_{u,v}, Z_{u,v} \right\}_{(u,v)\in([U]\times [V])\backslash\{(u',v')\}} \right) \\
&=& I\left(  W_{u',v'};X^{(1)}_{u',v'} \big|  \left\{ W_{u,v}, Z_{u,v} \right\}_{(u,v)\in([U]\times [V])\backslash\{(u',v')\}} \right) ]\notag\\
&&+\underbrace{H\Big( W_{u',v'} \,\big|\, X^{(1)}_{u',v'}, \left\{W_{u,v}, Z_{u,v}\right\}_{(u,v)\in([U]\times [V])\setminus\{(u',v')\}} \Big)}_{\overset{(\ref{covpftt1})}{=}0}\\
&\leq& H\left(  X^{(1)}_{u',v'} \big|  \left\{ W_{u,v}, Z_{u,v} \right\}_{(u,v)\in([U]\times [V])\backslash\{(u',v')\}} \right) \label{eq:cr1} \\ 
&\leq& H\left(X^{(1)}_{u',v'} \right) ~\leq~ L^{(1)}_X \label{eq:cr2}\\
& \Rightarrow & R_1  \overset{(\ref{rate})}{=} \frac{L^{(1)}_X}{L} \geq 1.
\end{eqnarray}

We prove the converse bound for the second hop message rate in the first round. Intuitively, the converse for the second hop follows from the fact that the server must be able to recover the aggregate contribution of each relay from the first-round messages, even if that relay drops out in the second round. 
From the correctness requirement, the information transmitted in the first round over the second hop must already suffice to reconstruct the sum of all users served by any relay that may be absent in the second round. 
Since any relay can potentially drop out after the first round, the first-round second-hop message must convey at least $L$ bits corresponding to the aggregated inputs of the users associated with that relay.

Fix any $u' \in [U]$, and define
$\{\mathcal{V}^{(1)}_u\}_{u\in \mathcal{U}^{(1)}} = [U]\times [V],  
\{\mathcal{V}^{(2)}_u\}_{u\in \mathcal{U}^{(2)}} = ([U]\setminus\{u'\})\times [V],$
with $\mathcal{U}^{(1)} = [U]$ and $\mathcal{U}^{(2)} = [U]\setminus\{u'\}$.
From the correctness constraint in~\eqref{correctness}, we have
\begin{align}
0
&\overset{(\ref{correctness})}{=}
H\Bigg(\sum_{(u,v)\in[U]\times [V]} W_{u,v}
\,\Bigg|\,
\left\{Y^{(1)}_u\right\}_{u\in[U]},
\left\{Y^{(2)}_u\right\}_{u\in[U]\setminus\{u'\}}
\Bigg) \\
&\geq
H\Bigg(\sum_{(u,v)\in[U]\times [V]} W_{u,v}
\,\Bigg|\,
\left\{Y^{(1)}_u\right\}_{u\in[U]},
\left\{Y^{(2)}_u\right\}_{u\in[U]\setminus\{u'\}},
\Big\{X^{(1)}_{u,v}\Big\}_{(u,v)\in([U]\setminus\{u'\})\times [V]}, \notag\\
&\hspace{2.5cm}
\Big\{X^{(2)}_{u,v}\Big\}_{(u,v)\in([U]\setminus\{u'\})\times [V]},
\left\{W_{u,v}, Z_{u,v}\right\}_{(u,v)\in([U]\setminus\{u'\})\times [V]}
\Bigg) \\
&\overset{(\ref{messageX2}),(\ref{messageY1}),(\ref{messageY2})}{=}
H\Bigg( \sum_{(u,v)\in\{(u',v)\}_{v\in [V]}} W_{u,v} \Bigg| Y^{(1)}_{u'},\left\{W_{u,v}, Z_{u,v}\right\}_{(u,v)\in([U]\setminus\{u'\})\times [V]}\Bigg),\label{covpftt2}
\end{align}
where~\eqref{covpftt2} follows since
$\{X^{(1)}_{u,v}\}_{(u,v)\in([U]\setminus\{u'\})\times [V]}$ and
$\{X^{(2)}_{u,v}\}_{(u,v)\in([U]\setminus\{u'\})\times [V]}$ 
are deterministic functions of
$\{W_{u,v}, Z_{u,v}\}_{(u,v)\in([U]\setminus\{u'\})\times [V]}$,
as defined in~(\ref{messageX1}) and~(\ref{messageX2}), and
$\{Y^{(1)}_u\}_{u\in[U]\setminus\{u'\}}$ and
$\{Y^{(2)}_u\}_{u\in[U]\setminus\{u'\}}$ 
are deterministic functions of
$\{X^{(1)}_{u,v}\}_{(u,v)\in([U]\setminus\{u'\})\times [V]}$ and
$\{X^{(2)}_{u,v}\}_{(u,v)\in([U]\setminus\{u'\})\times [V]}$, respectively,
as defined in~(\ref{messageY1}) and~(\ref{messageY2}).
Next,
\begin{eqnarray}
L &\overset{(\ref{inputsize})}{=}& H\Bigg( \sum_{(u,v)\in\{(u',v)\}_{v\in [V]}} W_{u,v}\Bigg)\\
&\overset{(\ref{independent})}{=}& H\left(\sum_{(u,v)\in\{(u',v)\}_{v\in [V]}} W_{u,v} \Bigg|  \left\{W_{u,v},Z_{u,v}\right\}_{(u,v)\in([U]\backslash\{u'\})\times [V]} \right) \\
&=& I\left(\sum_{(u,v)\in\{(u',v)\}_{v\in [V]}} W_{u,v} ; Y^{(1)}_{u'}\Bigg| \left\{W_{u,v},Z_{u,v}\right\}_{(u,v)\in([U]\backslash\{u'\})\times [V]} \right) \notag\\
&&+\underbrace{H\Bigg( \sum_{(u,v)\in\{(u',v)\}_{v\in [V]}} W_{u,v} \Bigg| Y^{(1)}_{u'},\left\{W_{u,v}, Z_{u,v}\right\}_{(u,v)\in([U]\setminus\{u'\})\times [V]}\Bigg)}_{\overset{(\ref{covpftt2})}{=}0}\\
&\leq&  H\left( Y^{(1)}_{u'}\Bigg| \left\{W_{u,v},Z_{u,v}\right\}_{(u,v)\in([U]\backslash\{u'\})\times [V]} \right)  \\ 
&\leq& H\left(Y^{(1)}_{u'} \right) ~\leq~ L^{(1)}_Y ,\\
& \Rightarrow~& R_1 \overset{(\ref{rate})}{=} \frac{L^{(1)}_Y}{L} \geq 1.
\end{eqnarray}

\subsection{Converse for $R^{(2)}_X \geq \frac{1}{V_{0}U_{0}-T}$ When $U_{0}V_{0} > T$} \label{RX2RY2}

We prove the converse bound for the second-round message rate.
Consider the setting where
$\mathcal{V}^{(1)}_u=\mathcal{V}^{(2)}_u=[V_{0}]$,
$\mathcal{U}^{(1)} = [U_{0}+1]$,
$\mathcal{U}^{(2)} = [U_{0}]$,
and let $\mathcal{T} \subset [U_{0}]\times [V_{0}]$ with $|\mathcal{T}|=T$.

Before presenting the formal converse proof, we first outline the main intuition.
The key observation is that, due to the server security constraint, the collection consisting of all first-round second-hop messages together with any $T$ second-round first-hop messages is statistically independent of the desired sum.
Therefore, these messages cannot contribute any useful information for decoding the sum at the server.
As a result, all information required to recover
$\sum_{(u,v)\in[U_0]\times[V_0]} W_{u,v}$
from the first hop must be conveyed by the remaining $U_0V_0 - T$ second-round messages.
Since the entropy of the desired sum equals $L$, it follows that, on average, each of these remaining messages must carry at least $L/(U_0V_0 - T)$ symbols of information.

The following proof formalizes this intuition using standard information-theoretic inequalities.
Specifically, from the server security constraint in \eqref{serversecurity}, we have
\begin{flalign}
    0&\overset{(\ref{serversecurity})}{=} I\Big(\left\{W_{u,v}\right\}_{(u,v)\in[U]\times [V]}; \left\{Y^{(1)}_{u}\right\}_{u\in[U]}, \left\{Y_{u}^{(2)}\right\}_{u\in[U_{0}+1]} \Big| \sum_{(u,v)\in[U_{0}+1]\times [V_{0}]} W_{u,v}, \left\{ W_{u,v}, Z_{u,v} \right\}_{(u,v)\in\mathcal{T}} \Big)& \notag\\
 &\geq I\Bigg(\sum_{(u,v)\in[U_{0}]\times [V_{0}]}W_{u,v}; \left\{Y^{(1)}_{u}\right\}_{u\in[U_{0}]} \Bigg| \sum_{(u,v)\in[U_{0}+1]\times [V_{0}]} W_{u,v}, \left\{ W_{u,v}, Z_{u,v} \right\}_{(u,v)\in\mathcal{T}} \Bigg)& \\
 &\overset{(\ref{messageX2})}{=} I\Bigg(\sum_{(u,v)\in[U_{0}]\times [V_{0}]}W_{u,v}; \left\{Y^{(1)}_{u}\right\}_{u\in[U_{0}]},\left\{ X^{(2)}_{u,v}\right\}_{(u,v)\in\mathcal{T}} \Bigg| \sum_{(u,v)\in[U_{0}+1]\times [V_{0}]} W_{u,v}, \left\{ W_{u,v}, Z_{u,v} \right\}_{(u,v)\in\mathcal{T}} \Bigg)& \\
 &= I\Bigg(\sum_{(u,v)\in[U_{0}]\times [V_{0}]}W_{u,v}; \left\{Y^{(1)}_{u}\right\}_{u\in[U_{0}]},\left\{ X^{(2)}_{u,v}\right\}_{(u,v)\in\mathcal{T}} , \sum_{(u,v)\in[U_{0}+1]\times [V_{0}]} W_{u,v}, \left\{ W_{u,v}, Z_{u,v} \right\}_{(u,v)\in\mathcal{T}} \Bigg)& \notag\\
 &- \underbrace{I\Bigg(\sum_{(u,v)\in[U_{0}]\times [V_{0}]}W_{u,v};  \sum_{(u,v)\in[U_{0}+1]\times [V_{0}]} W_{u,v}, \left\{ W_{u,v}, Z_{u,v} \right\}_{(u,v)\in\mathcal{T}} \Bigg)}_{\overset{(\ref{eq:indc})}{=}0}& \label{covpfr2t1}\\
 &\geq I\Bigg(\sum_{(u,v)\in[U_{0}]\times [V_{0}]}W_{u,v}; \left\{Y^{(1)}_{u}\right\}_{u\in[U_{0}]},\left\{ X^{(2)}_{u,v}\right\}_{(u,v)\in\mathcal{T}}  \Bigg), \label{covpfr2t2}&
\end{flalign}
where the second term of~(\ref{covpfr2t1}) is zero by applying Lemma~\ref{lemma:indc}, with
$\mathcal{T} = [U_1]\times [V_1]$, $V_2 = V_3 = V_0$, $U_2 = U_0$, and $U_3 = U_0 + 1$.

Next, consider $\mathcal{V}^{(1)} = \mathcal{V}^{(2)}  = [V_{0}]$ and $\mathcal{U}^{(1)}  = \mathcal{U}^{(2)}  = [U_{0}]$. From the correctness constraint (\ref{correctness}), we have
\begin{align}
L \overset{(\ref{inputsize})}{=}&  I\left(\sum_{(u,v)\in[U_{0}]\times [V_{0}]}W_{u,v} \right) \label{covpfr2xt0}\\
 =& \underbrace{H\left(\sum_{(u,v)\in[U_{0}]\times [V_{0}]}W_{u,v} \Bigg|\left\{Y^{(1)}_{u}\right\}_{u\in[U_{0}]}, \left\{Y^{(2)}_{u}\right\}_{u\in[U_{0}]} \right)}_{\overset{(\ref{correctness})}{=}0} \notag\\
 &+  I\left(\sum_{(u,v)\in[U_{0}]\times [V_{0}]}W_{u,v} ;\left\{Y^{(1)}_{u}\right\}_{u\in[U_{0}]}, \left\{Y^{(2)}_{u}\right\}_{u\in[U_{0}]} \right) \label{covpfr2xt11}\\
\leq&  I\left(\sum_{(u,v)\in[U_{0}]\times [V_{0}]}W_{u,v} ;\left\{Y^{(1)}_{u}\right\}_{u\in[U_{0}]}, \left\{Y^{(2)}_{u}\right\}_{u\in[U_{0}]},\left\{X^{(2)}_{u,v}\right\}_{(u,v)\in[U_{0}]\times [V_{0}]} \right) \label{covpfr2xt1}\\
\overset{(\ref{messageY2})}{=}&  I\left(\sum_{(u,v)\in[U_{0}]\times [V_{0}]}W_{u,v} ;\left\{Y^{(1)}_{u}\right\}_{u\in[U_{0}]}, \left\{X^{(2)}_{u,v}\right\}_{(u,v)\in[U_{0}]\times [V_{0}]} \right) \label{covpfr2xt2}\\
=& I\left(\sum_{(u,v)\in[U_{0}]\times [V_{0}]}W_{u,v} ; \left\{X^{(2)}_{u,v}\right\}_{(u,v)\in([U_{0}]\times [V_{0}])\backslash \mathcal{T}}\Bigg|\left\{Y^{(1)}_{u}\right\}_{u\in[U_{0}]},\left\{ X^{(2)}_{u,v}\right\}_{(u,v)\in\mathcal{T}}  \right) \notag\\
&+\underbrace{I\Bigg(\sum_{(u,v)\in[U_{0}]\times [V_{0}]}W_{u,v}; \left\{Y^{(1)}_{u}\right\}_{u\in[U_{0}]},\left\{ X^{(2)}_{u,v}\right\}_{(u,v)\in\mathcal{T}}  \Bigg)}_{\overset{(\ref{covpfr2t2})}{=}0}\\
\leq&H\left( \left\{X^{(2)}_{u,v}\right\}_{(u,v)\in([U_{0}]\times [V_{0}])\backslash \mathcal{T}} \right)\\
\leq& \sum_{(u,v)\in([U_{0}]\times [V_{0}])\backslash \mathcal{T}} H\left(X^{(2)}_{u,v}\right)\leq (V_{0}U_{0}-T)L^{(2)}_X, \\
\Rightarrow &  R_2 \overset{(\ref{rate})}{=} \frac{L^{(2)}_X}{L} \geq \frac{1}{V_{0}U_{0}-T}.
\end{align}

We next justify the steps in the above derivation.
(\ref{covpfr2xt0}) follows from the definition of the input size in~(\ref{inputsize}).
The conditional entropy term in the first term of (\ref{covpfr2xt11}) is zero due to the correctness constraint in~(\ref{correctness}),
since $\mathcal{V}^{(1)}_u=\mathcal{V}^{(2)}_u=\{(u,v)\}_{v\in[V_0]}$ and
$\mathcal{U}^{(1)}=\mathcal{U}^{(2)}=[U_0]$, which guarantees that
$\sum_{(u,v)\in[U_0]\times[V_0]} W_{u,v}$ can be recovered from
$\{Y^{(1)}_u\}_{u\in[U_0]}$ and $\{Y^{(2)}_u\}_{u\in[U_0]}$.
(\ref{covpfr2xt1}) follows since adding side information cannot decrease mutual information.
Equality~(\ref{covpfr2xt2}) holds because $\{Y^{(2)}_u\}_{u\in[U_0]}$ is a deterministic function of
$\{X^{(2)}_{u,v}\}_{(u,v)\in[U_0]\times[V_0]}$ according to~(\ref{messageY2}).

\subsection{Lower bound for $R^{(2)}_Y \geq \frac{1}{U_{0}-\lfloor {T}/{V_{0}}  \rfloor}$ When $U_{0}V_{0} > T$} \label{RX2RY22}

We prove the converse bound for the second-round message rate.  
Consider the setting
$\mathcal{V}^{(1)}_u = \mathcal{V}^{(2)}_u = [V_{0}]$, 
$\mathcal{U}^{(1)} = [U_{0}+1]$, 
$\mathcal{U}^{(2)} = [U_{0}]$, and let 
$[\lfloor T/V_0 \rfloor] \times [V_0] \subseteq \mathcal{T}$ with $|\mathcal{T}| = T$.

Before presenting the formal lower bound proof, we briefly explain the main idea.  
Due to the server security constraint, once the server is given the messages and keys corresponding to any set $\mathcal{T}$ of size $T$, the desired sum becomes statistically independent of all first-round second-hop messages and of a subset of second-round second-hop messages.  
In particular, by choosing $\mathcal{T}$ such that $[\lfloor T/V_0 \rfloor] \times [V_0] \subseteq \mathcal{T}$, the second-round messages $\{Y_u^{(2)}\}_{u \in [\lfloor T/V_0 \rfloor]}$ carry no useful information for decoding the desired sum.

Consequently, in the second hop, only the remaining $U_0 - \lfloor T/V_0 \rfloor$ second-round messages can convey information about $\sum_{(u,v) \in [U_0] \times [V_0]} W_{u,v}$.  
Since the entropy of the desired sum equals $L$, this yields the conservative lower bound
$R^{(2)}_Y \ge L / (U_0 - \lfloor T/V_0 \rfloor)$.  
We remark that this bound may not be tight, as it follows from a worst-case integer partition of $\mathcal{T}$ across users; a potentially tighter bound of the form $L / (U_0 - T/V_0)$ is suggested by symmetry, but establishing such a result would require fundamentally different techniques.

Then, from the server security constraint \eqref{serversecurity}, we have
\begin{flalign}
0 \overset{(\ref{serversecurity})}{=}& I\Bigg(\left\{W_{u,v}\right\}_{(u,v)\in[U]\times [V]}; \left\{Y^{(1)}_{u}\right\}_{u\in[U]}, \left\{Y_{u}^{(2)}\right\}_{u\in[U_{0}+1]} \Bigg| \sum_{(u,v)\in[U_{0}+1]\times [V_{0}]} W_{u,v}, \left\{ W_{u,v}, Z_{u,v} \right\}_{(u,v)\in\mathcal{T}} \Bigg)& \notag\\
\geq& I\Big(\sum_{(u,v)\in[U_{0}]\times [V_{0}]}W_{u,v}; \left\{Y^{(1)}_{u}\right\}_{u\in[U_{0}]} \Big| \sum_{(u,v)\in[U_{0}+1]\times [V_{0}]} W_{u,v}, \left\{ W_{u,v}, Z_{u,v} \right\}_{(u,v)\in\mathcal{T}} \Big)& \\
\overset{(\ref{messageY2})}{=}& I\Big(\sum_{(u,v)\in[U_{0}]\times [V_{0}]}W_{u,v}; \left\{Y^{(1)}_{u}\right\}_{u\in[U_{0}]},\left\{ Y^{(2)}_{u}\right\}_{u\in\big[\big\lfloor \tfrac{T}{V_{0}} \big\rfloor\big]} \Big| \sum_{(u,v)\in[U_{0}+1]\times [V_{0}]} W_{u,v}, \left\{ W_{u,v}, Z_{u,v} \right\}_{(u,v)\in\mathcal{T}} \Big)& \label{covpfr2Yt1}\\
=& I\Big(\sum_{(u,v)\in[U_{0}]\times [V_{0}]}W_{u,v}; \left\{Y^{(1)}_{u}\right\}_{u\in[U_{0}]},\left\{ Y^{(2)}_{u}\right\}_{u\in \big[\big\lfloor \tfrac{T}{V_{0}} \big\rfloor\big]} , \sum_{(u,v)\in[U_{0}+1]\times [V_{0}]} W_{u,v}, \left\{ W_{u,v}, Z_{u,v} \right\}_{(u,v)\in\mathcal{T}} \Big)& \notag\\
&- \underbrace{I\Big(\sum_{(u,v)\in[U_{0}]\times [V_{0}]}W_{u,v};  \sum_{(u,v)\in[U_{0}+1]\times [V_{0}]} W_{u,v}, \left\{ W_{u,v}, Z_{u,v} \right\}_{(u,v)\in\mathcal{T}} \Big)}_{\overset{(\ref{eq:indc})}{=}0}& \label{covpfr2t3}\\
\geq&I\Big(\sum_{(u,v)\in[U_{0}]\times [V_{0}]}W_{u,v}; \left\{Y^{(1)}_{u}\right\}_{u\in[U_{0}]},\left\{Y^{(2)}_{u}\right\}_{u\in\big[\big\lfloor \tfrac{T}{V_{0}} \big\rfloor\big]} \Big), \label{covpfr2t22}&
\end{flalign}
where (\ref{covpfr2Yt1}) holds because $[\lfloor T/V_0 \rfloor] \times [V_0] \subseteq \mathcal{T}$.
Hence, for all $u \in \big[\big\lfloor \tfrac{T}{V_0} \big\rfloor\big]$, the second-round symbols
$\{X^{(2)}_{u,v}\}_{v\in[V_0]}$ are completely determined by
$\{W_{u,v}, Z_{u,v}\}_{(u,v)\in\mathcal{T}}$, which are already conditioned upon.
As a result, the corresponding second-round messages
$\{Y^{(2)}_u\}_{u\in\big[\big\lfloor \tfrac{T}{V_0} \big\rfloor\big]}$
are deterministic functions of the conditioning variables and can be added to the mutual information
without affecting its value.
The second term in (\ref{covpfr2t3}) is equal to zero by Lemma~\ref{lemma:indc}.
Specifically, by setting $\mathcal{T} = [U_1]\times[V_1]$ with
$V_2 = V_3 = V_0$, $U_2 = U_0$, and $U_3 = U_0 + 1$,
the desired sum $\sum_{(u,v)\in[U_0]\times[V_0]} W_{u,v}$ is independent of
$\sum_{(u,v)\in[U_0+1]\times[V_0]} W_{u,v}$ and of
$\{W_{u,v}, Z_{u,v}\}_{(u,v)\in\mathcal{T}}$.

Next, consider $\mathcal{V}^{(1)} = \mathcal{V}^{(2)}  = [V_{0}]$ and $\mathcal{U}^{(1)}  = \mathcal{U}^{(2)}  = [U_{0}]$. From the correctness constraint (\ref{correctness}), we have
\begin{align}
L \overset{(\ref{correctness})}{=}&  H\left(\sum_{(u,v)\in[U_{0}]\times [V_{0}]}W_{u,v}  \right) \\
  =& \underbrace{H\left(\sum_{(u,v)\in[U_{0}]\times [V_{0}]}W_{u,v} \Bigg|\left\{Y^{(1)}_{u}\right\}_{u\in[U_{0}]}, \left\{Y^{(2)}_{u}\right\}_{u\in[U_{0}]} \right)}_{\overset{(\ref{correctness})}{=}0} \\
  & + I\left(\sum_{(u,v)\in[U_{0}]\times [V_{0}]}W_{u,v} ;\left\{Y^{(1)}_{u}\right\}_{u\in[U_{0}]}, \left\{Y^{(2)}_{u}\right\}_{u\in[U_{0}]} \right) \label{covpfr2y2t1}\\
=&\underbrace{H\Bigg(\sum_{(u,v)\in[U_{0}]\times [V_{0}]}W_{u,v}; \left\{Y^{(1)}_{u}\right\}_{u\in[U_{0}]},\left\{ Y^{(2)}_{u}\right\}_{u\in\big[\big\lfloor \tfrac{T}{V_{0}} \big\rfloor\big]} \Bigg)}_{\overset{(\ref{covpfr2t22})}{=}0} \notag\\
& + I\left(\sum_{(u,v)\in[U_{0}]\times [V_{0}]}W_{u,v} ; \left\{Y^{(2)}_{u}\right\}_{u\in[U_{0}]\backslash \big[\big\lfloor \tfrac{T}{V_{0}} \big\rfloor\big]}\Bigg|\left\{Y^{(1)}_{u}\right\}_{u\in[U_{0}]},\left\{ Y^{(2)}_{u}\right\}_{u\in\big[\big\lfloor \tfrac{T}{V_{0}} \big\rfloor\big]}  \right) \label{covpfr2y2t2}\\
\leq&H\left( \left\{Y^{(2)}_{u}\right\}_{u\in[U_{0}]\backslash \big[\big\lfloor \tfrac{T}{V_{0}} \big\rfloor\big]} \right)\\
\leq& \sum_{u\in[U_{0}]\backslash \big[\big\lfloor \tfrac{T}{V_{0}} \big\rfloor\big]} H\left(Y^{(2)}_{u}\right)\leq (U_{0}-\big\lfloor \tfrac{T}{V_{0}} \big\rfloor)L^{(2)}_Y,\\
& \Rightarrow R^{(2)}_Y \overset{(\ref{rate})}{=} \frac{L^{(2)}_Y}{L} \geq \frac{1}{U_{0}-\big\lfloor \tfrac{T}{V_{0}} \big\rfloor}.
\end{align}
where the first term  in (\ref{covpfr2y2t1}) is zero due to the correctness constraint in~(\ref{correctness}).
Specifically, under the setting
$\mathcal{V}^{(1)}_u=\mathcal{V}^{(2)}_u=\{(u,v)\}_{v\in[V_0]}$ and
$\mathcal{U}^{(1)}=\mathcal{U}^{(2)}=[U_0]$,
the desired sum $\sum_{(u,v)\in[U_0]\times[V_0]} W_{u,v}$ can be reliably recovered from
$\{Y^{(1)}_u\}_{u\in[U_0]}$ and $\{Y^{(2)}_u\}_{u\in[U_0]}$.
Hence, conditioning on these messages leaves no residual uncertainty.

\section{Conclusion}
\label{sec:conclusion}

In this paper, we studied information-theoretic hierarchical secure aggregation under user and relay dropouts with collusion constraints. We established correctness and security guarantees in a two-round hierarchical model and characterized the communication cost for most links. While tight results were obtained in several regimes, a gap remains in the second-round relay-to-server communication.
Several directions remain open. A primary question is to close the gap in the second-round communication. It is also of interest to extend the model to structured collusion settings, such as group-wise security constraints. Another important direction is to characterize the optimal key rate and understand the tradeoff between shared randomness and communication under dropout. 
Further extensions include heterogeneous network settings with asymmetric user distributions and more realistic dropout models. These directions may provide a deeper understanding of the fundamental limits of hierarchical secure aggregation.

\bibliographystyle{IEEEtran}
\bibliography{references_secagg.bib}
\end{document}